\newcommand{\bigO}[1]{$\mathcal{O}\paren{#1}$\xspace}
\newcommand{\bigOmega}[1]{\mbox{\mbox{$\Omega$}$\paren{#1}$}\xspace}
\newcommand{\bigTheta}[1]{\mbox{\mbox{$\Theta$}$\paren{#1}$}\xspace}
\newcommand{\paren}[1]{\left(  #1 \right)}
\newcommand{\edge}[1]{(  #1 )}
\newtheorem{theorem}{Theorem}
\newtheorem{lemma}{Lemma}
\newtheorem{definition}[theorem]{Definition}
  \providecommand\BibTeX{{%
    \normalfont B\kern-0.5em{\scshape i\kern-0.25em b}\kern-0.8em\TeX}}}
\begin{document}

\title{Cover Edge-Based Novel Triangle Counting}

\author{David A. Bader}
\email{bader@njit.edu}
\author{Fuhuan Li}
\email{fl28@njit.edu}
\author{Zhihui Du}
\email{zhihui.du@njit.edu}
\author{Palina Pauliuchenka}
\email{pp272@njit.edu}
\author{Oliver Alvarado Rodriguez}
\email{oaa9@njit.edu}
\affiliation{%
  \institution{New Jersey Institute of Technology}
  \city{Newark}
  \state{New Jersey}
  \country{USA}
  \postcode{07102}
}

\author{Anant Gupta}
\affiliation{%
  \institution{John P. Stevens High School}
  \city{Edison}
  \country{USA}}

\author{Sai Sri Vastav Minnal}
\affiliation{%
  \institution{Edison Academy Magnet School}
  \city{Edison}
  \country{USA}
}

\author{Valmik Nahata}
\affiliation{%
 \institution{New Providence High School}
 \city{New Providence}
 \state{New Jersey}
 \country{USA}}

\author{Anya Ganeshan}
\affiliation{%
  \institution{Bergen County Academies}
  \city{Hackensack}
  \state{New Jersey}
  \country{USA}}

\author{Ahmet Gundogdu}
\affiliation{%
  \institution{Paramus High School}
  \city{Paramus}
  \state{New Jersey}
  \country{USA}}

\author{Jason Lew}
\affiliation{%
  \institution{New Jersey Institute of Technology}
  \city{Newark}
  \state{New Jersey}
  \country{USA}
  \postcode{07102}
}
\email{jl247@njit.edu}

\renewcommand{\shortauthors}{Bader~\emph{et al.}}

\begin{abstract}
Listing and counting triangles in graphs is a key algorithmic kernel for network analyses, including community detection, clustering coefficients, k-trusses, and triangle centrality. 
In this paper, we propose the novel concept of a cover-edge set that can be used to find triangles more efficiently. Leveraging the breadth-first search (BFS) method, we can quickly generate a compact cover-edge set.  Novel sequential and parallel triangle counting algorithms that employ cover-edge sets are presented. The novel sequential algorithm performs competitively with the fastest previous approaches on both real and synthetic graphs, such as those from the Graph500 Benchmark and the MIT/Amazon/IEEE Graph Challenge. We implement 22 sequential algorithms for performance evaluation and comparison. At the same time, we employ OpenMP to parallelize 11 sequential algorithms, presenting an in-depth analysis of their parallel performance. Furthermore, we develop a distributed parallel algorithm that can asymptotically reduce communication on massive graphs.  In our estimate from massive-scale Graph500 graphs, our distributed parallel algorithm can reduce the communication on a scale~36 graph by 1156x and on a scale~42 graph by 2368x. Comprehensive experiments are conducted on the recently launched Intel Xeon 8480+ processor and shed light on how graph attributes, such as topology, diameter, and degree distribution, can affect the performance of these algorithms.
\end{abstract}

\begin{CCSXML}
<ccs2012>
 <concept>
  <concept_id>00000000.0000000.0000000</concept_id>
  <concept_desc>Do Not Use This Code, Generate the Correct Terms for Your Paper</concept_desc>
  <concept_significance>500</concept_significance>
 </concept>
 <concept>
  <concept_id>00000000.00000000.00000000</concept_id>
  <concept_desc>Do Not Use This Code, Generate the Correct Terms for Your Paper</concept_desc>
  <concept_significance>300</concept_significance>
 </concept>
 <concept>
  <concept_id>00000000.00000000.00000000</concept_id>
  <concept_desc>Do Not Use This Code, Generate the Correct Terms for Your Paper</concept_desc>
  <concept_significance>100</concept_significance>
 </concept>
 <concept>
  <concept_id>00000000.00000000.00000000</concept_id>
  <concept_desc>Do Not Use This Code, Generate the Correct Terms for Your Paper</concept_desc>
  <concept_significance>100</concept_significance>
 </concept>
</ccs2012>
\end{CCSXML}


\keywords{Graph Algorithms, High-Performance Data Analytics, Parallel Algorithms}


\maketitle

\section{Introduction}
Triangle listing and counting is a highly-studied problem in computer science and is a key building block in various graph analysis techniques such as clustering coefficients \cite{watts1998collective}, k-truss \cite{cohen2008trusses}, and triangle centrality \cite{burkhardt2021triangle}, \cite{li2021graphblas}.  The significance of triangle counting is evident in its application in high-performance computing benchmarks like Graph500 \cite{graph500} and the MIT/Amazon/IEEE Graph Challenge \cite{GraphChallenge}, as well as in the design of future architecture systems (e.g., IARPA AGILE \cite{slides_on_AGILE}). 

There are at most $\binom{n}{3} = \bigTheta{n^3}$ triangles in a graph $G = (V, E)$ with $n=|V|$ vertices and $m=|E|$ edges. The na\"ive approach using triply-nest loops to check if each triple $(u, v, w)$ forms a triangle takes \bigO{n^3} time and is inefficient for sparse graphs.  It is well-known that listing all triangles in G is $\bigOmega{m^{\frac{3}{2}}}$ time \cite{itai1978finding, latapy2007practical}. To enhance the performance of triangle counting, Cohen \cite{cohen2009graph} introduced a novel map-reduce parallelization technique that generates \emph{open wedges} between triples of vertices in the graph. It determines whether a closing edge exists to complete a triangle, thus avoiding the redundant counting of the same triangle while maintaining load balancing. Many parallel approaches for triangle counting \cite{pearce2017triangle,ghosh2020tric} partition the sparse graph data structure across multiple compute nodes and adopt the strategy of generating open wedges, which are sent to other compute nodes to determine the presence of a closing edge. Consequently, the communication time for these open wedges often dominates the running time of parallel triangle counting.

In this paper, we propose a novel approach that efficiently identifies all triangles using a reduced set of edges known as a cover-edge set. By leveraging the cover-edge-based triangle counting method, unnecessary edge checks can be skipped while ensuring that no triangles are missed. This significantly reduces the number of computational operations compared to existing methods. 

The main contributions of this paper are

\begin{itemize}
\item A novel triangle counting algorithm, \emph{Cover-Edge Triangle Counting (CETC)}, is proposed based on a new concept \emph{Cover-Edge Set}. The essential idea is that we can identify all triangles from a significantly reduced cover-edge set instead of the complete edge set. A simple breadth-first search (BFS) is used to orient the graph's vertices into levels and to generate the cover-edge set. 
\item Various sequential variants of the \emph{CETC} that combine the techniques of cover-edge, forward algorithm, and hashing are developed. Furthermore, the parallel implementations of \emph{CETC} on both shared-memory \emph{(CETC-SM)}, and distributed-memory \emph{(CETC-DM)} are introduced. 
\item Freely-available, open-source software for more than 22 sequential triangle counting algorithms and 11 OpenMP parallel algorithms in the C programming language.
\item A comprehensive experimental study of implementations of the proposed novel triangle counting algorithms on real and synthetic graphs with the comparison against other existing algorithms.
\end{itemize}

\section{Notations and Definitions}
\label{subsec:notation}
Let $G = (V, E)$ be an undirected graph with $n=|V|$ vertices and $m=|E|$ edges. A \emph{triangle} in the graph is a set of three vertices $\{v_a, v_b, v_c\} \subseteq V$ such that $\{\edge{v_a, v_b}, \edge{v_a, v_c}, \edge{v_b, v_c}\} \subseteq E$. We will use $N(v) = \{ u|u \in V \wedge (\edge{u,v} \in E)\}$ to denote the \emph{neighbor set} of vertex $v \in V$.
The degree of vertex $v \in V$ is $d(v) = |N(v)|$, and $d_{\text{max}}$ is the maximal degree of a vertex in graph $G$.

With these notations, the total number of triangles in graph $G$ is denoted as $|\Delta(G)|$. Specifically, $\Delta(G)=\{(u,v,w)| u,v,w$ are different vertices of $V$ and $\edge{u,v}, \edge{v,w}, \edge{w,u}$ are edges of $E\}$.

The triangle counting problem can be expressed in two ways based on edges and vertices:

\begin{itemize}
    \item For any edge $\edge{u,v} \in E$, the number of triangles including $\edge{u,v}$ is $|\Delta \edge{u,v}|$, where $\Delta \edge{u,v} = N(u) \cap N(v)$. Since each triangle edge will count the same triangle and we will count both $\Delta \edge{u,v}$ and $\Delta \edge{v,u}$, the total triangles are computed as $|\Delta(G)|=\frac{\sum_{\edge{u,v}\in E }|\Delta \edge{u,v}|}{6}$, using the edge-iteration-based method.
    \item For any vertex $v \in V$, the number of triangles including $v$ is $|\Delta(v)|$, where $\Delta(v) = \{ \edge{u,w} \,|\, u, w \in N(v) \land \edge{u,w} \in E\}$. The total triangles are computed as $|\Delta(G)|=\frac{\sum_{v\in V }|\Delta(v)|}{6}$, using the vertex-iteration-based method.
\end{itemize}

\section{Related Work}
\label{sec:related}
\subsection{Existing Sequential Algorithms}
For triangle counting, the obvious algorithm is brute-force search (see Alg.~\ref{alg:triplets}), enumerating over all \bigTheta{n^3} triples of distinct vertices, and checking how many of these triples are triangles. There are faster algorithms that require an adjacency matrix for the input graph representation and use fast matrix multiplication, such as the work of Alon, Yuster, and Zwick \cite{alon1997finding}. Indeed, if $A$ is the adjacency matrix of $G$, for any vertex $v$, the value $A_{vv}^3$ on the diagonal of $A^3$ is twice the number of triangles to which $v$ belongs. So the number of trianlges is $\frac{1}{6}\sum tr(A^3)$. Triangle counting problems can therefore be solved in \bigO{n^{1.5}}, where $\omega < 2.732$ is the fast matrix product exponent \cite{alman2021refined} \cite{williams2023new}. Alon~\emph{et al.} \cite{alon1997finding} also show that it is possible to solve triangle counting problem in \bigO{m^{\frac{2 \omega}{\omega + 1}}} $\subset$ \bigO{m^{1.41}} time. However, the implementation is infeasible for large, sparse graphs, and certain matrix multiplication methods fall short of listing all the triangles. For these reasons, despite their evident theoretical strength, these algorithms have limited practical impact.

\begin{algorithm}[htbp]
\footnotesize
\caption{Triples}
\label{alg:triplets}
\begin{algorithmic}[1]
\Require{Graph $G = (V, E)$}
\Ensure{Triangle Count $T$}
\State $T \leftarrow 0$  
\State $\forall u \in V$
\State \hspace{8pt} $\forall v \in V$
\State \hspace{16pt} $\forall w \in V$
\State \hspace{24pt} if $(u,v) \in E \land (v,w) \in E \land (u,w) \in E$
\State \hspace{32pt} $T \leftarrow T + 1$
\State return $T/6$
\end{algorithmic}
\end{algorithm}

Another category of fundamental problem formulation is called subgraph query, which aims to identify instances of a triangle subgraph within the input graph. It's crucial to emphasize that determining the presence of a specific subgraph in a graph is an NP-hard problem. While various methods, including the backtracking strategy \cite{ullmann1976algorithm}, have been introduced, they are not preferred choices for triangle counting problem, particularly for large-scale graphs. 

Latapy \cite{latapy2007practical} provides a survey on triangle counting algorithms for very large, sparse graphs. One of the earliest algorithms, \emph{tree-listing}, published in 1978 by Itai and Rodeh \cite{itai1978finding} first finds a rooted spanning tree of the graph. After iterating through the non-tree edges and using criteria to identify triangles, the tree edges are removed and the algorithm repeats until no edges are remaining (see Alg.~\ref{alg:tree-listing}). This approach takes \bigO{m^{\frac{3}{2}}} time (or \bigO{n} for planar graphs).

\begin{algorithm}[htbp]
\footnotesize
\caption{Tree-listing (IR) \cite{itai1978finding}}
\label{alg:tree-listing}
\begin{algorithmic}[1]
\Require{Graph $G = (V, E)$}
\Ensure{Triangle Count $T$}
\State $T \leftarrow 0$ 
\State while $E$ is not empty
\State \hspace{8pt} $K \leftarrow$ Covering tree($G$) 
\State \hspace{8pt} $\forall (u,v) \in E\ \land (u,v) \notin K$
\State \hspace{16pt} if $(\text{parent}(u),v) \in E$
\State \hspace{24pt} $T \leftarrow T + 1$
\State \hspace{16pt} elif $(\text{parent}(v),u) \in E$
\State \hspace{24pt} $T \leftarrow T + 1$
\State \hspace{8pt} $E \leftarrow E-K$
\State return $T/2$
\end{algorithmic}
\end{algorithm}

The most common triangle counting algorithms in the literature include 
\emph{vertex-iterator} \cite{itai1978finding}, \cite{latapy2007practical}
and \emph{edge-iterator} \cite{itai1978finding}, \cite{latapy2007practical} approaches that run in \bigO{m \cdot d_{max}}.

\begin{algorithm}[htbp]
\footnotesize
\caption{Vertex-Iterator \cite{itai1978finding}, \cite{latapy2007practical}}
\label{alg:node}
\begin{algorithmic}[1]
\Require{Graph $G = (V, E)$}
\Ensure{Triangle Count $T$}
\State $T \leftarrow 0$   
\State $\forall u \in V$  
\State \hspace{8pt} $\forall v \in N(u)$ 
\State \hspace{16pt} $X = Intersection(N(u),N(v))$
\State \hspace{16pt} $T \leftarrow T + X$
\State return $T/6$
\end{algorithmic}
\end{algorithm}

In vertex-iterator (see Alg.~\ref{alg:node}), for each vertex $u \in V$, the algorithm examines the adjacency list $N(v)$ of each vertex $v \in N(u)$. If there is vertex $w$ in the intersection of $N(u)$ and $N(v)$, then the triplet $(u,v,w)$ forms a triangle. Arifuzzaman~\emph{et al.} \cite{arifuzzaman2019} study modifications of the vertex-iterator algorithm based on various methods for vertex ordering.

\begin{algorithm}[htbp]
\footnotesize
\caption{Edge-Iterator \cite{itai1978finding}, \cite{latapy2007practical}}
\label{alg:edge}
\begin{algorithmic}[1]
\Require{Graph $G = (V, E)$}
\Ensure{Triangle Count $T$}
\State $T \leftarrow 0$   
\State $\forall (u,v) \in E$   
\State \hspace{8pt} $X = Intersection(N(u), N(v))$
\State \hspace{8pt} $T \leftarrow T + X$
\State return $T/6$
\end{algorithmic}
\end{algorithm}

In edge-iterator (see Alg.~\ref{alg:edge}), each edge $(u, v)$ in the graph is examined, and the intersection of $N(u)$ and $N(v)$ is computed to find triangles. A common optimization is to use a \emph{direction-oriented} approach that only considers edges $(u, v)$ where $u< v$.  
The variants of edge-iterator are often based on the algorithm used to perform $Intersection(N(u), N(v))$. When the two adjacency lists are sorted, then \emph{MergePath} and \emph{BinarySearch} can be used. MergePath performs a linear scan through both lists counting the common elements. Makkar, Bader, and Green \cite{makkar2017} give an efficient MergePath algorithm for GPU. Mailthody~\emph{et al.} \cite{mailthody2018} use an optimized two-pointer intersection (MergePath) for set intersection.
BinarySearch, as the name implies, uses a binary search to determine if each element of the smaller list is found in the larger list. \emph{Hash} is another method for performing the intersection of two sets and it does not require the adjacency lists to be sorted. A typical implementation of \emph{Hash} initializes a Boolean array of size $m$ to all false. Then, positions in \emph{Hash} corresponding to the vertex values in $N(u)$ are set to true. Then $N(v)$ is scanned, looking up in $\bigTheta{1}$ time whether or not there is a match for each vertex. Chiba and Nishizeki published one of the earliest edge iterators with hashing algorithms for triangle finding in 1985 \cite{chiba1985}. The running time is \bigO{a(G) m}, where $a(G)$ is defined as the arboricity of $G$, which is upper-bounded $a(G) \leq \lceil (2m + n) ^{\frac{1}{2}} / 2 \rceil$ \cite{chiba1985}. In 2018, Davis rediscovered this method, which he calls \texttt{tri\_simple} in his comparison with SuiteSparse GraphBLAS \cite{davis2018HPEC}. Mowlaei \cite{Mowlaei2017} gave a variant of the edge-iterator algorithm that uses vectorized sorted set intersection and reorders the vertices using the reverse Cuthill-McKee heuristic.

In 2005, Schank and Wagner \cite{schank2005finding,schank2007} designed a fast triangle counting algorithm called \emph{forward} (see Alg.~\ref{alg:forward}) that is a refinement of the edge-iterator approach. Instead of intersections of the full adjacency lists, the \emph{forward} algorithm uses a dynamic data structure $A(v)$ to store a subset of the neighborhood $N(v)$ for $v \in V$. Initially, each set $A()$ is empty, and after computing the intersection of the sets $A(u)$ and $A(v)$ for each edge $(u, v)$ (with $u<v$), $u$ is added to $A(v)$. This significantly reduces the size of the intersections needed to find triangles. The running time is \bigO{m \cdot d_{\mbox{max}}}. However, if one reorders the vertices in decreasing order of their degrees as a \bigTheta{n \log n} time pre-processing step, the forward algorithm's running time reduces to \bigO{m^{\frac{3}{2}}}. Ortmann and Brandes \cite{ortmann2014} survey triangle counting algorithms, create a unifying framework for parsimonious implementations, and conclude that nearly every triangle listing variant is in \bigO{m \cdot a(G)}. 

\begin{algorithm}[htbp]
\footnotesize
\caption{Forward Triangle Counting (F) \cite{schank2005finding,schank2007}}
\label{alg:forward}
\begin{algorithmic}[1]
\Require{Graph $G = (V, E)$}
\Ensure{Triangle Count $T$}
\State $T \leftarrow 0$
\State $\forall v \in V$
    \State \hspace{8pt} $A(v) \leftarrow \emptyset$
\State $\forall \edge{u, v} \in E$
    \State \hspace{8pt} if $(u<v)$ then
        \State \hspace{16pt} $\forall w \in A(u) \cap A(v)$
            \State \hspace{24pt} $T \leftarrow T+1$
        \State \hspace{16pt} $A(v) \leftarrow A(v) \cup \{ u\}$
        \State return $T$
\end{algorithmic}
\end{algorithm}

The \emph{forward-hashed} algorithm \cite{schank2005finding,schank2007} (also called \emph{compact-forward} \cite{latapy2007practical}) is a variant of the forward algorithm that uses the hashing described previously for the intersections of the $A()$ sets, see Algorithm~\ref{alg:forwardhash}. Low~\emph{et al.} \cite{Low2017} derive a linear algebra method for triangle counting that does not use matrix multiplication. Their algorithm results in the forward-hashed algorithm. 

\begin{algorithm}[htbp]
\footnotesize
\caption{Forward-Hashed Triangle Counting (FH)\cite{schank2005finding,schank2007}}
\label{alg:forwardhash}
\begin{algorithmic}[1]
\Require{Graph $G = (V, E)$}
\Ensure{Triangle Count $T$}
\State $T \leftarrow 0$
\State $\forall v \in V$
    \State \hspace{8pt} $A(v) \leftarrow \emptyset$
\State $\forall \edge{u, v} \in E$
    \State \hspace{8pt} if $(u<v)$ then
        \State \hspace{16pt} $\forall w \in A(u)$
            \State \hspace{24pt} Hash[$w$] $\leftarrow$ true
        \State \hspace{16pt} $\forall w \in A(v)$
            \State \hspace{24pt} if Hash[$w$] then
                \State \hspace{32pt} $T \leftarrow T+1$
        \State \hspace{16pt} $\forall w \in A(u)$
            \State \hspace{24pt} Hash[$w$] $\leftarrow$ false
        \State \hspace{16pt} $A(v) \leftarrow A(v) \cup \{ u\}$
        \State return $T$
\end{algorithmic}
\end{algorithm}

\subsection{Existing Parallel Algorithms}
\label{subsec:related_work_parallel}
Although most of the sequential algorithms tend to run fast on graphs that fit in main memory, the expanding size of graphs, driven by ongoing technology advancements, poses a challenge. To further accelerate, the emergence of parallel version algorithms is inevitable. Alg.~\ref{alg:mergeP}, Alg.~\ref{alg:binaryP}, and Alg.~\ref{alg:hashP} are the parallel versions of the three most common intersection-based triangle counting methods. 

\begin{algorithm}[htbp]
\footnotesize
\caption{Parallel Edge Iterator with Merge Path (EMP)}
\label{alg:mergeP}
\begin{algorithmic}[1]
\Require Graph $G = (V, E)$
\Ensure Triangle Count $T$
\State $T \leftarrow 0$  
\State $\forall(u,v) \in E$ do in parallel
\State \hspace{8pt} $A \leftarrow N(u)$, $B \leftarrow N(v)$ 
\State \hspace{8pt} $x \leftarrow 0$, $y \leftarrow 0$
\State \hspace{8pt} while $x < |A|\ \land \ y < |B|$
\State \hspace{16pt} if $A[x] == B[y]$
\State \hspace{24pt} $T \leftarrow T+1$; 
\State \hspace{24pt} $x \leftarrow x+1$, $y \leftarrow y+1$
\State \hspace{16pt} else
\State \hspace{24pt} if $A[x] < B[y]$
\State \hspace{32pt} $x \leftarrow x+1$
\State \hspace{24pt} else
\State \hspace{32pt} $y \leftarrow y+1$
\State return $T / 6$ 
\end{algorithmic}
\end{algorithm}

\begin{algorithm}[htbp]
\footnotesize
\caption{Parallel Edge Iterator with Binary Search (EBP)}
\label{alg:binaryP}
\begin{algorithmic}[1]
\Require Graph $G = (V, E)$
\Ensure Triangle Count $T$
\State $T \leftarrow 0$  
\State $\forall (u,v) \in E$ do in parallel
\State \hspace{8pt} for $l \in N(u)$ do 
\State \hspace{16pt} $K \leftarrow N(v) $
\State \hspace{16pt} $bottom \leftarrow 0$, $top \leftarrow |K|$
\State \hspace {16pt} while $bottom < top$
\State \hspace {24pt} $mid \leftarrow bottom + (top-bottom)/2$
\State \hspace {24pt} if $K[mid]==l$
\State \hspace {32pt} $T \leftarrow T+1$
\State \hspace {32pt} break
\State \hspace {24pt} elif $K[mid]<l$
\State \hspace {32pt} $bottom \leftarrow mid+1$
\State \hspace {24pt} else
\State \hspace {32pt} $top=mid$
\State return $T / 6$ 
\end{algorithmic}
\end{algorithm}

\begin{algorithm}[htbp]
\footnotesize
\caption{Parallel Edge Iterator with Hash (EHP)}
\label{alg:hashP}
\begin{algorithmic}[1]
\Require Graph $G = (V, E)$
\Ensure Triangle Count $T$
\State $T \leftarrow 0$  
\State $\forall(u,v) \in E$ do in parallel
\State \hspace{8pt} for $w \in N(u)$
\State \hspace{16pt} hash(w) = True
\State \hspace{8pt} for $w \in N(v)$
\State \hspace {16pt} if hash(w)
\State \hspace {24pt} $T \leftarrow T + 1$
\State \hspace {8pt} for $w \in N(u)$
\State \hspace{16pt} hash(w) = False
\State return $T / 6$ 
\end{algorithmic}
\end{algorithm}

Besides the intersection-based methods, there are several optimized parallel algorithms in the literature. Shun \emph{et al.} \cite{shun2015multicore} give a multi-core parallel algorithm for shared memory machines. The algorithm has two steps: in the first step each vertex is ranked based on degree and a ranked adjacency list of each vertex is generated, which contains only higher-ranked vertices than the current vertex; the second step counts triangles from the ranked adjacency list for each vertex using merge-path or hash. Parimalarangan \emph{et al.} \cite{parimalarangan2017fast} present variations of triangle counting algorithms and how they relate to performance in shared-memory platforms. TriCore \cite{hu2018tricore} partitions the graph held in a compressed-sparse row (CSR) data structure for multiple GPUs and uses stream buffers to load edge lists from CPU memory to GPU memory on the fly and then uses binary search to find the intersection. Hu \emph{et~al.} \cite{hu2021accelerating} employ a ``copy-synchronize-search'' pattern to improve the parallel threads efficiency of GPU and mix the computing and memory-intensive workloads together to improve the resource efficiency. Zeng \emph{et~al.} \cite{zeng2022htc} present a triangle counting algorithm that adaptively selects vertex-parallel and edge-parallel paradigm.

\section{Cover-Edge Based Triangle Counting Algorithms}
\label{sec:algorithms}

\subsection{Cover-Edge Set}\label{subsec:cover-edge_sets}

\begin{definition}[Cover-Edge, Cover-Edge Set and Covering Ratio]\label{def:coveredge}
For any edge $e$ of a triangle $\Delta$ in graph $G$, $e$ is referred to as a cover-edge of $\Delta$. For a given graph $G$, an edge set $S \subseteq E$ is called a cover-edge set if it contains at least one cover-edge for every triangle in $G$. $c=|S|/|E|$ is called the covering ratio.
\end{definition}

Based on the given definition, it is evident that the entire edge set $E$ can serve as a cover-edge set $S$ for graph $G$. However, our proposed method aims to efficiently count all triangles using a smaller subset of edges instead of $E$. Thus, the primary challenge lies in generating a compact cover-edge set, which forms the initial problem to be addressed in our approach. Our goal is to identify a cover-edge set with the smallest $c$. In this paper, we propose using breadth-first search (BFS) to generate a compact cover-edge set.

\begin{definition}[BFS-Edge]\label{def:bfsedges}
Let $r$ be the root vertex of an undirected graph $G$. The level $L(v)$ of a vertex $v$ is defined as the shortest distance from $r$ to $v$ obtained through a breadth-first search (BFS). From the BFS, we classify the edges into three types:
\begin{itemize}
\item \emph{Tree-Edges}: These edges belong to the BFS tree.
\item \emph{Strut-Edges}: These are non-tree edges with endpoints on two adjacent levels in the BFS traversal.
\item \emph{Horizontal-Edges}: These are non-tree edges with endpoints on the same level in the BFS traversal.
\end{itemize}
\end{definition}

\begin{figure}
    \centering
    \includegraphics[width=0.35\textwidth]{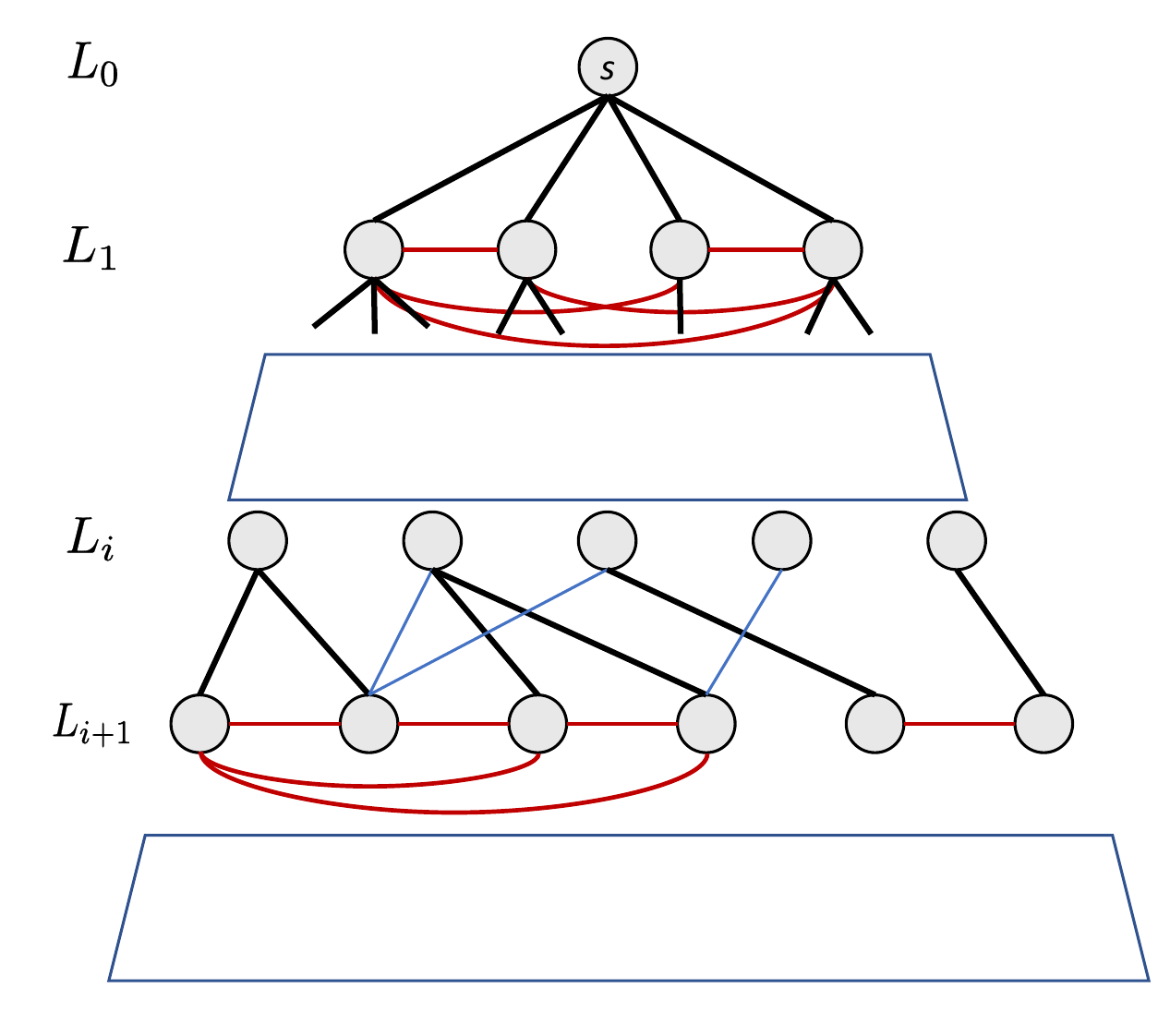}
    \caption{An example to mark different edges based on a BFS spanning tree. The tree-edges are black, strut-edges are blue, and horizontal-edges are red.}
    \label{fig:tree}
\end{figure}

Fig.~\ref{fig:tree} gives an example of these different edge types. 

\begin{lemma}
Each triangle $\{u, v, w\}$ in a graph contains at least one horizontal-edge in an arbitrarily rooted BFS tree.
\label{lem:H-E}
\end{lemma}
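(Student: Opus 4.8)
The plan is to prove the statement by analyzing the three possible edge types that the sides of a triangle can take in the BFS tree, and showing that the only consistent configuration forces at least one side to be horizontal. Recall from Definition~\ref{def:bfsedges} that every edge of $G$ is classified by the BFS levels of its endpoints: a tree-edge or strut-edge connects vertices whose levels differ by exactly $1$, while a horizontal-edge connects two vertices on the same level $L(v)$. The key structural fact I would isolate first is that \emph{any} edge $(x,y)\in E$ — tree, strut, or horizontal — satisfies $|L(x)-L(y)|\le 1$. This is a standard property of BFS: since $r$ reaches $y$ in $L(y)$ steps and $x$ is adjacent to $y$, we have $L(y)\le L(x)+1$, and symmetrically $L(x)\le L(y)+1$, so the levels of adjacent vertices differ by at most one.

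With that fact in hand, I would argue by contradiction. Suppose some triangle $\{u,v,w\}$ contains \emph{no} horizontal-edge. Then each of its three edges joins vertices on two \emph{adjacent} levels, i.e.\ each of $|L(u)-L(v)|$, $|L(v)-L(w)|$, and $|L(u)-L(w)|$ equals exactly $1$. Without loss of generality order the three vertices so that $L(u)\le L(v)\le L(w)$. The edge $(u,v)$ being non-horizontal forces $L(v)=L(u)+1$, and the edge $(v,w)$ being non-horizontal forces $L(w)=L(v)+1=L(u)+2$. But then the third edge $(u,w)$ satisfies $|L(u)-L(w)|=2$, which by the BFS property above is impossible for an edge of $G$. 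This contradiction shows the triangle must contain at least one horizontal-edge.

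The main obstacle, if any, is purely a matter of careful case-handling rather than any deep idea: I must make sure the ordering argument covers all ways the three level-differences can be realized, and in particular confirm that assuming every edge is tree-or-strut really does pin down the levels as an arithmetic progression of step $1$. The cleanest way to present this is to phrase everything in terms of the level function $L$ and the single inequality $|L(x)-L(y)|\le 1$ for edges, since then ``horizontal'' is exactly the case $L(x)=L(y)$ and ``non-horizontal'' is exactly $|L(x)-L(y)|=1$; the three pairwise differences among $L(u),L(v),L(w)$ cannot all be $\pm 1$ without one of them summing to $\pm 2$, a contradiction. I would conclude by noting that this lemma is precisely what justifies using the horizontal-edges as a cover-edge set in the sense of Definition~\ref{def:coveredge}, motivating the algorithmic development that follows.
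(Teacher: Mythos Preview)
Your proof is correct. Both you and the paper argue by contradiction, assuming every edge of the triangle is non-horizontal (i.e., crosses exactly one level), but the contradiction is derived differently. The paper treats the triangle as a closed walk of length~$3$: since each non-horizontal edge changes the level by $\pm 1$, returning to the starting vertex forces the number of $+1$ steps to equal the number of $-1$ steps, so the walk length must be even---contradicting that~$3$ is odd. You instead order the vertices by level and observe that two consecutive $+1$ steps force $L(w)=L(u)+2$, violating the BFS adjacency bound $|L(x)-L(y)|\le 1$ directly on the third edge. Your argument is arguably more concrete and makes the structural obstruction explicit; the paper's parity argument is slightly slicker and has the mild advantage of generalizing immediately to any odd cycle (every odd cycle must contain a horizontal-edge), which your ordering argument would need an extra line to recover.
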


\begin{proof}
(Proof by contradiction)
A triangle is a path of length 3 that starts and ends at the same vertex. Suppose there are no horizontal-edges in the triangle. In that case, every edge in the path (i.e., a tree-edge or strut-edge) either increases or decreases the level by one.

Since the path must end on the same level as the starting vertex, the number of edges in the path that decrease the level must be equal to the number of edges that increase the level. Consequently, the length of the path must be even to maintain level parity. However, this contradicts the fact that a triangle has an odd path length of 3.

Therefore, we conclude that there must be at least one horizontal-edge in every triangle.
\end{proof}

\begin{theorem}[Cover-Edge Set Generation]
All horizontal-edges in an arbitrarily rooted BFS tree form a valid cover-edge set.
\label{theo:Gen-CES}
\end{theorem}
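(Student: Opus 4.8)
The plan is to derive the theorem directly from Lemma~\ref{lem:H-E} by unwinding the definition of a cover-edge set; essentially all of the real content has already been established in the parity argument of that lemma. Let $S_H$ denote the set of all horizontal-edges produced by an arbitrarily rooted BFS. Since every horizontal-edge is, by Definition~\ref{def:bfsedges}, a non-tree edge of $G$, I would first note that $S_H \subseteq E$, so that $S_H$ is an admissible candidate set at all. The only remaining thing to verify is the covering property from Definition~\ref{def:coveredge}: that $S_H$ contains at least one cover-edge of every triangle of $G$.

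Next I would fix an arbitrary triangle $\Delta = \{u,v,w\}$ in $G$ and apply Lemma~\ref{lem:H-E}, which guarantees that $\Delta$ contains at least one horizontal-edge; call it $e$. Because $e$ is an edge of the triangle $\Delta$, it is by Definition~\ref{def:coveredge} a cover-edge of $\Delta$, and because $e$ is horizontal it lies in $S_H$. Hence $S_H$ contains a cover-edge of $\Delta$. Since $\Delta$ was arbitrary, $S_H$ contains at least one cover-edge for every triangle of $G$, which is precisely the defining condition for $S_H$ to be a cover-edge set.

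There is no substantive obstacle at this stage: the difficulty of the result was entirely absorbed into Lemma~\ref{lem:H-E}. The only points that warrant a sentence of care are confirming the inclusion $S_H \subseteq E$ and observing that ``$\Delta$ contains a horizontal-edge'' literally means ``$\Delta$ contains a cover-edge that happens to be horizontal,'' so that Lemma~\ref{lem:H-E} plugs directly into Definition~\ref{def:coveredge} with no logical gap. I would therefore present this as a short, definition-chasing proof rather than introduce any new machinery.
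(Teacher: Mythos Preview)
Your proposal is correct and mirrors the paper's own proof almost exactly: the paper simply invokes Lemma~\ref{lem:H-E} to obtain a horizontal-edge in every triangle and then reads off from Definition~\ref{def:coveredge} that the collection of horizontal-edges is a cover-edge set. Your additional remark that $S_H \subseteq E$ is a harmless (and technically appropriate) clarification, but otherwise there is no substantive difference in approach.
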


\begin{proof}
According to Definition \ref{def:coveredge}, for any triangle $\Delta$ in graph $G$, we can always find at least one horizontal-edge that serves as a cover-edge for $\Delta$. Thus, the set of all horizontal-edges constitutes a cover-edge set.
\end{proof}

Therefore, we can construct a cover-edge set, denoted as \emph{BFS-CES}, by selecting all the horizontal-edges obtained during a breadth-first search (\emph{BFS}). It is evident that \emph{BFS-CES} is a subset of $E$ and is typically much smaller than the complete edge set $E$.  

\subsection{Cover-Edge Triangle Counting: \emph{CETC}}
\label{subsec:CETC}
In this subsection, we provide a comprehensive description of the algorithm to identify all triangles using a cover-edge set generated through a breadth-first search.

\begin{lemma}
Each triangle $\{u, v, w\}$ must contain either one or three horizontal-edges.
\label{lem:HE2}
\end{lemma}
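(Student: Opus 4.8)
The plan is to build directly on Lemma~\ref{lem:H-E}, which already guarantees that every triangle contains \emph{at least} one horizontal-edge. What remains is to rule out the case of exactly two horizontal-edges, since a triangle has three edges and each edge is classified as tree, strut, or horizontal. So the statement reduces to a parity argument on the levels of the three vertices $u$, $v$, $w$.

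First I would recall the key observation from the proof of Lemma~\ref{lem:H-E}: a horizontal-edge connects two vertices on the \emph{same} BFS level, while both tree-edges and strut-edges connect vertices on \emph{adjacent} levels (differing by exactly one). I would then consider the three levels $L(u)$, $L(v)$, $L(w)$ and argue by counting how many of the three pairs lie on equal levels. An edge is horizontal precisely when its two endpoints share a level; it is non-horizontal (tree or strut) precisely when its endpoints' levels differ by one.

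The core step is to show that the number of same-level pairs among the three vertices cannot be exactly two. I would phrase this in terms of the equivalence induced by ``lying on the same level.'' If exactly two of the three edges were horizontal, then two pairs of vertices would be at equal levels; say $L(u)=L(v)$ and $L(v)=L(w)$. By transitivity this forces $L(u)=L(w)$, making the third edge horizontal as well — a contradiction with the assumption that only two are horizontal. Thus the count of horizontal-edges can be $0$, $1$, or $3$, but never $2$. Combining this with Lemma~\ref{lem:H-E}, which excludes the count $0$, leaves only $1$ or $3$, which is exactly the claim.

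I expect the only subtle point to be making the transitivity argument airtight: one must confirm that ``same level'' is genuinely an equivalence relation on the vertex set (reflexive, symmetric, transitive via equality of the integer-valued function $L$), so that having two of the three pairs equal forces the third. This is essentially immediate because $L(\cdot)$ is a well-defined function into the integers, and equality is transitive; I would state it briefly rather than belabor it. No heavy machinery is needed — the entire argument rests on the level function established by the BFS and the edge classification from Definition~\ref{def:bfsedges}.
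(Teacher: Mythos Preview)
Your argument is correct. The transitivity step is airtight: in a triangle any two edges share a vertex, so if two edges are horizontal the shared vertex forces all three levels to coincide, making the third edge horizontal as well. Combined with Lemma~\ref{lem:H-E} this yields exactly the dichotomy.

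Your route differs slightly from the paper's. The paper counts \emph{non}-horizontal edges and reuses the parity observation from the proof of Lemma~\ref{lem:H-E}: traversing the closed path of length~$3$, each tree- or strut-edge changes the level by $\pm 1$, so their total number must be even, hence $0$ or $2$, leaving $3$ or $1$ horizontal-edges respectively. You instead count horizontal edges directly and rule out the value~$2$ via transitivity of the equivalence relation induced by $L$. Both arguments are elementary and equally short; the paper's version has the minor advantage of not needing to invoke Lemma~\ref{lem:H-E} separately (the parity argument simultaneously excludes $0$ and $2$), while your version makes the role of the level function as an integer-valued map more explicit and is arguably cleaner to state without reference to the earlier proof.
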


\begin{proof}
By referring to the proof of Lemma~\ref{lem:H-E}, we know that the path corresponding to the triangle's three edges consists of an even number of tree-edges and strut-edges. This implies that there can be either 0 or 2 tree- or strut-edges within each triangle.

In the case where there are 0 tree- or strut-edges, all three edges of the triangle must be horizontal-edges. This is because the absence of tree- or strut-edges implies that the entire path is composed of horizontal-edges.

In the case where there are 2 tree- or strut-edges, the triangle contains exactly one horizontal-edge. This is because having two tree- or strut-edges in the path means that there is one horizontal-edge connecting the remaining two vertices.

Therefore, we conclude that each triangle $\{u, v, w\}$ must contain either one or three horizontal-edges.
\end{proof}

Our sequential triangle counting approach (\emph{CETC-Seq}), described in Alg.~\ref{alg:CETC}, efficiently counts triangles using a cover-edge set. In line \ref{l:init}, we initialize the counter $T$ to 0, which will store the total number of triangles. To generate the cover-edge set, we perform a breadth-first search (BFS) starting from any unvisited vertex, identifying the 
level $L(v)$ of each vertex $v$ in its respective component, as shown in lines \ref{l:bfs1} to \ref{l:bfs2}. In lines \ref{l:edge1} to \ref{l:edge2} the algorithm iterates over each edge, selecting the cover-set of horizontal edges $\edge{u,v}$ in a direction-oriented fashion in line~\ref{l:horiz}.
For each vertex $w$ in the intersection of $u$ and $v$'s neighborhoods (line~\ref{l:intersect}), we check the following two conditions to determine if $(u, v, w)$ is a unique triangle to be counted (line~\ref{l:logic}).
If $L(u) \neq L(w)$  then the edge $\edge{u, v}$ is the only horizontal-edge in the triangle $(u, v, w)$. If $L(u) \equiv L(w)$, then the edge $\edge{u,v}$ is one of three horizontal-edges in the triangle $(u, v, w)$. To ensure uniqueness, the algorithm then checks the added constraint that $v<w$. If the constraints are satisfied, we increment the triangle counter $T$ in line~\ref{l:edge2}.

This approach effectively counts the triangles in the graph while avoiding redundant counting.

\begin{algorithm}[htbp]
\footnotesize
\caption{CETC: Cover-Edge Triangle Counting (\emph{CETC-Seq}) }
\label{alg:CETC}
\begin{algorithmic}[1]
\Require{Graph $G = (V, E)$}
\Ensure{Triangle Count $T$}
\State $T \leftarrow 0$  \label{l:init}
\State $\forall v \in V$ \label{l:bfs1}
\State \hspace{8pt} if $v$ unvisited, then BFS($G$, $v$) \label{l:bfs2}
\State $\forall \edge{u, v} \in E$ \label{l:edge1}
\State \hspace{8pt} if $(L(u) \equiv L(v)) \land (u < v)$ \label{l:horiz} \Comment{$\edge{u,v}$ is horizontal}
\State \hspace{16pt} $\forall w \in N(u) \cap N(v)$ \label{l:intersect}
\State \hspace{24pt} if $(L(u)\neq L(w)) \lor \left( (L(u) \equiv L(w)) \land (v<w) \right)$ then \label{l:logic}
\State \hspace{32pt}  $T\leftarrow T + 1 $     \label{l:edge2} 
\State return $T$
\end{algorithmic}
\end{algorithm}

\begin{theorem}[Correctness]
Alg.~\ref{alg:CETC} can accurately count all triangles in a graph $G$.
\label{theo:correctness}
\end{theorem}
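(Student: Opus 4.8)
The plan is to split correctness into \emph{soundness} (every increment of $T$ is a real triangle) and \emph{exact completeness} (every triangle of $G$ produces exactly one increment). Soundness is immediate from the loop body: when line~\ref{l:edge2} fires we have $w \in N(u)\cap N(v)$, hence $\edge{u,w},\edge{v,w}\in E$, and $\edge{u,v}\in E$ was the edge being processed, so $\{u,v,w\}$ is a genuine triangle. Before the case analysis I would also note that lines~\ref{l:bfs1}--\ref{l:bfs2} run a BFS from every component, so each vertex receives a well-defined level $L(\cdot)$; since a triangle is connected, its three vertices always lie in one component and their levels are mutually comparable.

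The core of the argument is to count, for a fixed but arbitrary triangle $\{a,b,c\}$, the number of increments it causes. Because the outer loop only proceeds past line~\ref{l:horiz} for horizontal-edges, a triangle can be counted only while one of \emph{its own} horizontal-edges is being processed. I would therefore invoke Lemma~\ref{lem:HE2}, which guarantees the triangle has either exactly one or exactly three horizontal-edges, and treat these as the two cases.

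In the one-horizontal-edge case, let $\edge{p,q}$ be that unique horizontal-edge (so $L(p)=L(q)$) and let $w$ be the third vertex, which lies on a different level, $L(w)\neq L(p)$. The triangle is reachable only when the loop processes $\edge{p,q}$; the orientation constraint $u<v$ in line~\ref{l:horiz} fixes a single orientation, $w$ appears in the intersection of line~\ref{l:intersect}, and since $L(u)\neq L(w)$ the first disjunct of line~\ref{l:logic} holds, giving exactly one count. In the three-horizontal-edge case, $a,b,c$ share a common level, so all three edges are horizontal and each is a candidate; here the tie-break $\left((L(u)\equiv L(w))\land(v<w)\right)$ selects the winner. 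Writing the vertices in identifier order as $x<y<z$, I would check the three oriented edges $\edge{x,y},\edge{x,z},\edge{y,z}$ and observe that only $\edge{x,y}$ (with $w=z$) satisfies $v<w$, while the other two fail it, again yielding exactly one count. Together the two cases show $T=|\Delta(G)|$.

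The main obstacle is the three-horizontal-edge case. Unlike the single-edge case, where the outer loop already isolates the one eligible edge, here all three horizontal-edges pass line~\ref{l:horiz}, so correctness rests entirely on the interaction between the direction-orientation $u<v$ and the tie-break $v<w$. The crux is to verify that these two inequalities jointly admit exactly one of the three edge orientations---neither zero (which would under-count) nor more than one (which would over-count)---and the $x<y<z$ enumeration is precisely what pins this down.
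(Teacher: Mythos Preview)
Your proposal is correct and follows essentially the same approach as the paper: invoke Lemma~\ref{lem:HE2} to split into the one-horizontal-edge and three-horizontal-edge cases, then argue each triangle is counted exactly once. Your version is somewhat more explicit---you add the soundness direction and carry out the $x<y<z$ enumeration in the second case in full, whereas the paper simply asserts that the condition $v_a<v_b<v_c$ pins down a unique count---but the structure and key ideas are the same.
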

\begin{proof}
Lemma~\ref{lem:HE2} establishes that a triangle in the graph falls into one of two cases: 1) the two endpoint vertices of the horizontal-edge are on the same level while the apex vertex is on a different level, or 2) all three vertices of the triangle are at the same level.

Consider a triangle $\{v_a, v_b, v_c\}$ in $G$. Without loss of generality, assume that $\edge{v_a, v_b}$ is a horizontal-edge, implying $L(v_a) \equiv L(v_b)$. Let $v_c$ be the apex vertex. The two cases can be distinguished as follows:

For the first case, each triangle is uniquely defined by a horizontal-edge and an apex vertex from the common neighbors of the horizontal-edge's endpoint vertices. Whenever Alg.~\ref{alg:CETC} identifies such a triangle $\{v_a,v_b,v_c\}$, it increments the total triangle count $T$ by 1.

In the second case, where all three vertices are at the same level ($L(v_c) \equiv L(v_a) \equiv L(v_b)$), Alg.~\ref{alg:CETC} ensures that $T$ is increased by 1 only when $v_a < v_b < v_c$. This condition ensures that triangle $\{v_a,v_b,v_c\}$ is counted only once, preventing triple-counting and ensuring the correctness of the triangle count.

Hence, Alg.~\ref{alg:CETC} is proven to accurately count all triangles in the graph $G$.
\end{proof}

The time complexity of Alg.~\ref{alg:CETC} can be analyzed as follows. The computation of breadth-first search, including determining the level of each vertex and marking horizontal edges, requires $\mathcal{O}(n+m)$ time.

Since there are at most $\mathcal{O}(m)$ horizontal edges, finding the common neighbors of each horizontal edge individually can be done in $\mathcal{O}(d_{\text{max}})$ time. Here, $d_{\text{max}}$ represents the maximal degree of a vertex in the graph.

Therefore, the overall time complexity of \emph{CETC-Seq} is $ \mathcal{O}(m \cdot d_{\text{max}})$.

\subsection{Variants of \emph{CETC-Seq}}
\subsubsection{CETC Forward Exchanging Triangle Counting Algorithm (\emph{CETC-Seq-FE})}
The overall performance of \emph{CETC-Seq} is closely related to the covering ratio $c$. A higher covering ratio results in fewer reduced edges, which will increase the actual runtime of the algorithm. Therefore, after completing the \emph{BFS}, selecting an appropriate algorithm can be based on $c$. Alg.~\ref{alg:CETC-Seq-FE} presents the variant of \emph{CETC-Seq} that dynamically selects the most suitable approach based on $c$, called \emph{CETC-Seq-FE}. Initially, it calculates $c$ using \emph{BFS} results. If the $c$ value is below a specified threshold (The value of $c$ should be at least less than $(\frac{m-n+1}{m})$. After comparing the performance of Alg.~\ref{alg:CETC} and Alg.~\ref{alg:forward}, we set this threshold to 0.7 in our experiments.), we will continue using Alg.~\ref{alg:CETC}; otherwise, Alg.~\ref{alg:forward} is chosen. Considering the analyses presented in Alg.~\ref{alg:forward} and Alg.~\ref{alg:CETC}, Alg.~\ref{alg:CETC-Seq-FE} maintains a time complexity of \bigO{m^{1.5}}

\begin{algorithm}[htbp]
\footnotesize
\caption{CETC Forward Exchanging (\emph{CETC-Seq-FE})}
\label{alg:CETC-Seq-FE}
\begin{algorithmic}[1]
\Require{Graph $G = (V, E)$}
\Ensure{Triangle Count $T$}
\State $T \leftarrow 0$
\State $\forall v \in V$  
\State \hspace{8pt} if $v$ unvisited, then BFS($G$, $v$)  
\State Calculate $c$ based on the BFS results 
\State If $ (c<threshold)$ 
\State \hspace{8pt} $T \leftarrow \mbox{CETC-Seq}(G)$ \Comment{Alg.~\ref{alg:CETC}}
\State Else 
\State \hspace{8pt} $T \leftarrow \mbox{TC\_forward}(G)$ \Comment{Alg.~\ref{alg:forward}}
\State return $T $ 
\end{algorithmic}
\end{algorithm}

\subsubsection{CETC Split Triangle Counting Algorithm (CETC-Seq-S)}

\begin{algorithm}[htbp]
\footnotesize
\caption{CETC Split Triangle Counting (\emph{CETC-Seq-S})}
\label{alg:CETC-Seq-S}
\begin{algorithmic}[1]
\Require{Graph $G = (V, E)$}
\Ensure{Triangle Count $T$}
\State $T \leftarrow 0$
\State $\forall v \in V$ \label{l:CETC-Seq-S:bfs1}
    \State \hspace{8pt} if $v$ unvisited, then BFS($G$, $v$) \label{l:CETC-Seq-S:bfs2}
\State $\forall \edge{u, v} \in E$ \label{l:CETC-Seq-S:edge1}
    \State \hspace{8pt} if $(L(u) \equiv L(v))$ then \Comment{$\edge{u,v}$ is horizontal}
        \State \hspace{16pt} Add $(u,v)$ to $G_0$
    \State \hspace{8pt} else
        \State \hspace{16pt} Add $(u,v)$ to $G_1$ \label{l:CETC-Seq-S:edge2}
\State $T \leftarrow \mbox{TC}\_{\mbox{forward-hashed}}(G_0)$ \Comment{Alg.~\ref{alg:forwardhash}} \label{l:CETC-Seq-S:forwardhashed}
\State $\forall u \in V_{G_1}$ \label{l:CETC-Seq-S:hash1}
    \State \hspace{8pt} $\forall v \in N_{G_1}(u)$
        \State \hspace{16pt} Hash[$v$] $\leftarrow$ true
    \State \hspace{8pt} $\forall v \in N_{G_0}(u)$
        \State \hspace{16pt} if $(u < v)$ then
            \State \hspace{24pt} $\forall w \in N_{G_1}(v)$
                \State \hspace{32pt} if Hash[$w$] then
                    \State \hspace{40pt} $T \leftarrow T+1$
    \State \hspace{8pt} $\forall v \in N_{G_1}(u)$
        \State \hspace{16pt} Hash[$v$] $\leftarrow$ false \label{l:CETC-Seq-S:hash2}
\State return $T $ 
\end{algorithmic}
\end{algorithm}

Alg.~\ref{alg:CETC-Seq-S} is another variant of \emph{CETC-Seq}, called \emph{CETC-Seq-S}. This variant is similar to cover-edge triangle counting in Alg.~\ref{alg:CETC} and uses BFS to assign a level to each vertex in lines \ref{l:CETC-Seq-S:bfs1} and \ref{l:CETC-Seq-S:bfs2}.
Next in lines~\ref{l:CETC-Seq-S:edge1} to \ref{l:CETC-Seq-S:edge2}, the edges $E$ of the graph are partitioned into two sets $E_0$ -- the horizontal edges where both endpoints are on the same level -- and $E_1$ -- the remaining tree and non-tree edges that span a level. Thus, we now have two graphs, $G_0 = (V, E_0)$ and $G_1 = (V, E_1)$, where $E = E_0 \cup E_1$ and $E_0 \cap E_1 = \emptyset$.  Triangles that are fully in $G_0$ are counted with one method and triangles not fully in $G_0$ are counted with another method.  For $G_0$, the graph with horizontal edges, we count the triangles efficiently using the forward-hashed method (line~\ref{l:CETC-Seq-S:forwardhashed}). For triangles not fully in $G_0$, the algorithm uses the following approach to count these triangles. Using $G_1$, the graph that contains the edges that span levels, we use a hashed intersection approach in lines~\ref{l:CETC-Seq-S:hash1} to \ref{l:CETC-Seq-S:hash2}. As per the cover-edge triangle counting, we need to find the intersections of the adjacency lists from the endpoints of horizontal edges. Thus, we use $G_0$ to select the edges and perform the hash-based intersections from the adjacency lists in graph $G_1$. The proof of correctness for cover-edge triangle counting is given in Section.~\ref{subsec:CETC}. Alg.~\ref{alg:CETC-Seq-S} is a hybrid version of this algorithm, that partitions the edge set, and uses two different methods to count these two types of triangles. The proof of correctness is still valid with these new refinements to the algorithm. The running time of Alg.~\ref{alg:CETC-Seq-S} is the maximum of the running time of forward-hashing and Alg.~\ref{alg:CETC}. Alg.~\ref{alg:CETC-Seq-S} uses hashing for the set intersections. For vertices $u$ and $v$, the cost is $\min(d(u), d(v))$ since the algorithm can check if the neighbors of the lower-degree endpoint are in the hash set of the higher-degree endpoint. Over all $(u,v)$ edges in $E$, these intersections take \bigO{m \cdot a(G)} expected time. Hence, Alg.~\ref{alg:CETC-Seq-S} takes \bigO{m \cdot a(G)} expected time. 

Similar to the forward-hashed method, pre-processing the graph by re-ordering the vertices in decreasing order of degree in \bigTheta{n \log n} time often leads to a faster triangle counting algorithm in practice.

\subsubsection{CETC-Split Recursive Triangle Counting Algorithm (\emph{CETC-Seq-SR})}
\begin{algorithm}[htbp]
\footnotesize
\caption{CETC-Split Recursive Triangle Counting (\emph{CETC-Seq-SR})}
\label{alg:CETC-Seq-SR}
\begin{algorithmic}[1]
\Require{Graph $G = (V, E)$}
\Ensure{Triangle Count $T$}
\State $T \leftarrow 0$
\State $\forall v \in V$ \label{l:CETC-Seq-SR:bfs1}
\State \hspace{8pt} if $v$ unvisited, then BFS($G$, $v$) \label{l:CETC-Seq-SR:bfs2}
\State $\forall \edge{u, v} \in E$ \label{l:CETC-Seq-SR:edge1}
\State \hspace{8pt} if $(L(u) \equiv L(v))$ then \Comment{$\edge{u,v}$ is horizontal}
\State \hspace{16pt} Add $(u,v)$ to $G_0$
\State \hspace{8pt} else
\State \hspace{16pt} Add $(u,v)$ to $G_1$ \label{l:CETC-Seq-SR:edge2}
\State if (size of $G_0>$  threshold) then 
\State \hspace{8pt} $T \leftarrow \mbox{CESR}(G_0)$
\State else
\State \hspace{8pt} $T \leftarrow \mbox{TC}\_{\mbox{forward-hashed}}(G_0)$ \Comment{Alg.~\ref{alg:forwardhash}} \label{l:CETC-Seq-SR:forwardhashed}
\State $\forall u \in V_{G_1}$ \label{l:CETC-Seq-SR:hash1}
\State \hspace{8pt} $\forall v \in N_{G_1}(u)$
\State \hspace{16pt} Hash[$v$] $\leftarrow$ true
\State \hspace{8pt} $\forall v \in N_{G_0}(u)$
        \State \hspace{16pt} if $(u < v)$ then
            \State \hspace{24pt} $\forall w \in N_{G_1}(v)$
                \State \hspace{32pt} if Hash[$w$] then
                    \State \hspace{40pt} $T \leftarrow T+1$
    \State \hspace{8pt} $\forall v \in N_{G_1}(u)$
        \State \hspace{16pt} Hash[$v$] $\leftarrow$ false \label{l:CETC-Seq-SR:hash2}
\State return $T $ 
\end{algorithmic}
\end{algorithm}

The Alg.~\ref{alg:CETC-Seq-SR} is similar to Alg.~\ref{alg:CETC-Seq-S}. The only difference is that for the subgraph $G_0$ consisting of the horizontal edges. If its size is larger than the given threshold value, we will recursively call the algorithm to further reduce the graph size (line 9).  If the size of $G_0$ is not larger than the given threshold value, we will directly call Alg.~\ref{alg:forwardhash} to get the total number of triangles in $G_0$ (line 11). We use the same threshold value of 0.7 in the experiment as outlined in Alg.~\ref{alg:CETC-Seq-FE}. The idea behind the recursive call is that we can quickly count the triangles containing edges across both $G_0$ and $G_1$, and then we can safely remove all the edges in $G_1$ to reduce the graph size. Finally, Alg.~\ref{alg:forwardhash} will focus on a smaller graph whose edges may contain multiple triangles.

\subsection{Parallel CETC Algorithm on Shared-Memory (\emph{CETC-SM})}

In Section~\ref{sec:related}, we introduced three commonly employed intersection-based methods: merge-path, binary search, and hash, alongside their corresponding parallel version as outlined in Alg.~\ref{alg:mergeP}, Alg.~\ref{alg:binaryP}, and Alg.~\ref{alg:hashP}.

The fundamental concept behind the proposed parallel algorithms is to calculate the intersection of neighbor lists of two endpoints of any $\edge{u,v}$ in parallel, which will significantly increase the performance.

\begin{algorithm}[htbp]
\caption{Shared Memory Parallel Cover-Edge Triangle Counting (\emph{CETC-SM})}
\label{alg:parallel}
\begin{algorithmic}[1]
\Require{Graph $G = (V, E)$}
\Ensure{Triangle Count $T$}
\State $c_1, c_2 \leftarrow 0$
\State Run Parallel BFS on $G$ and mark the level.
\State $\forall (u,v) \in E$ do in parallel
\State \hspace{8pt} if $(L(u) \equiv L(v)) \land (u < v)$ \Comment{$\edge{u,v}$ is horizontal}
\State \hspace{16pt} $\forall w \in N(u) \cap N(v)$
\State \hspace{24pt} if ($L(w) \neq L(u)$) then
\State \hspace{32pt} $c_1\leftarrow c_1 + 1$ 
\State \hspace{24pt} else
\State \hspace{32pt} $c_2 \leftarrow c_2 + 1$
\State $T \leftarrow c_1 + c_2/3$
\State return $T$ \Comment{See Alg.~\ref{alg:CETC}}
\end{algorithmic}
\end{algorithm}

Alg.\ref{alg:parallel} demonstrates the parallelization of the Covering-Edge triangle counting algorithm for shared-memory. In the context of the PRAM (Parallel Random Access Machines) model, both parallel \emph{BFS} and parallel sorting have been shown to achieve scalable performance \cite{cormen2022introduction}. For set intersection operations on a single edge, it is imperative that the computation remains well below $m^{0.5}$\cite{schank2007}, particularly when dealing with large input graphs, where $p$ represents the total number of processors. Consequently, the total work, which is $\mathcal{O}(m^{1.5})$, can be evenly distributed among $p$ processors. As a result, \emph{CETC-SM} exhibits a parallel time complexity of $\mathcal{O}(\frac{m^{1.5}}{p})$, ensuring scalability as the number of parallel processors increases.

\subsection{Communication-Efficient Parallel CETC Algorithm on Distributed-Memory (\emph{CETC-DM})}
\label{sec:parallel}
This subsection presents our communication-efficient parallel algorithm for counting triangles in massive graphs on a $p$-processor distributed-memory parallel computer. We will take advantage of the concept of \emph{Cover-Edge Set} to significantly improve the communication performance of our triangle counting method.
Since distributed triangle counting is communication-bound \cite{pearce2017triangle}, this algorithm is expected to improve the overall running time.
The input graph $G$ is stored in a compressed sparse row (CSR) format. The vertices are partitioned non-uniformly to the $p$ processors such that each processor stores approximately $2m/p$ edge endpoints.
This graph input follows the format used by the majority of parallel graph algorithm implementations and benchmarks such as Graph500 and Graph Challenge. 

Our communication-efficient parallel algorithm \emph{CETC-DM} (see Alg.~\ref{algparallel}) is based on the same cover-edge approach proposed in Section \ref{subsec:CETC}. %
The binary operator $\oplus$ used in line~\ref{l:swap} is bitwise exclusive OR (XOR).

\begin{algorithm}[htbp]
\footnotesize
\caption{CETC Communication Efficient Triangle Counting (\emph{CETC-DM})}
\label{algparallel}
\begin{algorithmic}[1]
\Require{Graph $G = (V, E)$}
\Ensure{Triangle Count $T$}
\State Run parallel BFS($G$) and build partial cover-edge set $S_i$ on $p_i$ \label{l:cbfs}
\State For all $p_i, i \in \{0 \ldots p-1\}$ in parallel do: \label{l:bcast1}
    \State \hspace{8pt} $t_i \leftarrow 0$  \label{l:pinit}
    \State \hspace{8pt}  $\forall \edge{u,v} \in S_i$ with $u<v$ on $p_i$ \label{l:sloop1}
        \State \hspace{16pt} $\forall w \in V_i$ such that $w \in N(u), N(v)$ \label{l:select1}
            \State \hspace{24 pt}  if $(L(u)\neq L(w)) \lor \left( (L(u) \equiv L(w)) \land (v<w) \right)$  then \label{l:count1}
                \State \hspace{32 pt} $t_i = t_i + 1$ \label{l:eloop1}
    \State \hspace{8pt} For $j \leftarrow 1$ to $p-1$ do: \label{l:bgraph}
        \State \hspace{16pt} Processors $i$ and $i \oplus j$ swap edge sets $S_i$ and $S_j$. \label{l:swap}
        \State \hspace{16pt}  $\forall \edge{u, v} \in S_j$ with $u<v$ on $p_i$ \label{l:sloop2}
            \State \hspace{24pt} $\forall w \in V_i$ such that $w \in N(u), N(v)$ \label{l:select2}
                 \State \hspace{32 pt}  if $(L(u)\neq L(w)) \lor \left( (L(u) \equiv L(w)) \land (v<w) \right)$ then \label{l:count2}
                    \State \hspace{40 pt} $t_i = t_i + 1$  \label{l:eloop2}
\State $T \leftarrow$ Reduce$(t_{i}, +)$ \label{l:reduce}
\end{algorithmic}
\end{algorithm}

Similar to the sequential \emph{CETC-Seq} algorithm, the cover-edge set $S = \cup_{i=0}^{p-1} S_i$ is determined in line~\ref{l:cbfs} by labeling the horizontal edges from a parallel BFS.

Each processor runs lines~\ref{l:bcast1} to \ref{l:eloop2} in parallel that consists of two main substeps. Local triangles are counted in lines~\ref{l:sloop1} to \ref{l:eloop1} and a total exchange of cover-edges between each pair of processors to count triangles is performed in lines~\ref{l:bgraph} to \ref{l:eloop2}. 
Note at the end of each iteration of the \emph{for} loop, processor $p_i$ can discard the cover-edge set $S_j$.
In lines~\ref{l:select1} and \ref{l:select2}, processor $p_i$ determines for each cover edge $\edge{u, v}$ all the apex vertices $w$ held locally that are adjacent to both $u$ and $v$. The logic for counting triangles in lines~\ref{l:count1} and \ref{l:count2} is similar to Alg.~\ref{alg:CETC} as to only count unique triangles. Finally, a reduction operation in line~\ref{l:reduce} calculates the total number of triangles by accumulating the $p$ triangle counters, i.e., $T = \sum_{i=0}^{p-1} t_i$.

\subsubsection{Cost Analysis}

\paragraph{Space}

In addition to the input graph data structure, an additional bit is needed per edge (for marking a horizontal-edge) and \bigO{\lceil \log D \rceil} bits per vertex to store its level, where $D$ is the diameter of the graph. This is a total of at most $m + n\lceil \log D \rceil$ bits across the $p$ processors. Preserving the graph requires additional \bigO{n+m} space for the graph. 

\paragraph{Compute}

The BFS costs \bigO{(n+m)/p} \cite{cormen2022introduction}. The search corresponding to one cover-edge in a vertex's adjacency list takes at most \bigO{\log(d_{max})} time using binary search, and only \bigO{1} expected time using a hash table. Let $d_i$ be the degree of vertex $v_i$ where $0\le i<n$. Searching $km$ edges in all vertices' adjacency lists takes $\mathcal{O}(km\sum_{i=0}^{n-1} \log(d_i)) = \mathcal{O}(km \log(\Pi_{i=0}^{n-1}d_i))$ time. Since $\sum_{i=0}^{n-1}d_i=2m$, we know that $\log(\Pi_{i=0}^{n-1}d_i)$ reaches its maximum value when $d_i=2m/n$ for $0\le i<n$. Thus, $\mathcal{O}(km \log(\Pi_{i=0}^{n-1}d_i)) \le \mathcal{O}(km \log((2m/n)^n)) \le \mathcal{O}(kmn \log(2n^2/n)) = \mathcal{O}(mn \log(n))$.

\paragraph{Total Communication}

In our analysis of communication cost for BFS, we measure the total communication volume independent of the number of processors. Thus, this is a conservative overestimate of communication since a fraction (e.g., $1/p$) of accesses will be on the same compute node versus message traffic between nodes. At the same time, we do not consider the savings from overlapping with the computation cost.

The cost of the breadth-first search is $m$ edge traversals with $\lceil \log D \rceil + 3 \lceil \log n \rceil$ bits communicated per edge traversal for the level information, pair of vertex ids, and vertex degree, yielding $m \cdot (\lceil \log D \rceil + 3 \lceil \log n \rceil)$ bits for the BFS.
Transferring $km$ horizontal-edges requires $kmp \lceil \log n \rceil$ bits, where $p$ is the number of processors.
The final reduction to find the total number of triangles requires $(p-1)\lceil \log n \rceil$ bits.

Hence, the total communication volume is 
$m \cdot (\lceil \log D \rceil + 3 \lceil \log n \rceil) + 
kmp \lceil \log n \rceil +
(p-1) \lceil \log n \rceil
=
m \cdot (\lceil \log D \rceil + (kp + 3) \lceil\log n \rceil) + (p -1) \lceil \log n \rceil$
bits.  Hence, since the word size is $\bigTheta{\log n}$ and $D \leq n$, 
the communication of \emph{CETC-DM} is \bigO{pm} words.

\section{Open Source Evaluation Framework}
In the preceding sections, we presented all sequential and shared-memory algorithms from literature known to the authors plus our novel approaches. In this section, we introduce our open-source framework designed to integrate comprehensive triangle counting implementations.

There lacks a unified framework encompassing all implementations, which is important for researchers to conduct performance comparisons among existing algorithms and to assess their efficacy against newly proposed methods. Consequently, we have developed a comprehensive open-source framework to solve this problem. This framework is designed to ensure a thorough evaluation of triangle counting algorithms. It includes implementations of 22 sequential methods and 11 parallel methods on shared-memory as complete a set of what is found in the literature.

Each triangle counting routine has a single argument -- a pointer to the graph in a compressed sparse row (CSR) format. The input is treated as read-only. Each algorithm is charged the full cost if the implementation needs auxiliary arrays, pre-processing steps, or additional data structures. Each implementation must manage memory and not contain any memory leaks -- hence, any dynamically allocated memory must be freed before returning the result.  

The output from each implementation is an integer with the number of triangles found.  Each algorithm is run ten times, and the mean running time is reported.  To reduce variance for random graphs, the same graph instance is used for all of the experiments.  For sequential algorithms, the source code is sequential C code without any explicit parallelization. For parallel algorithms, we use OpenMP to parallelize the C code. The same coding style and effort were used for each implementation. 

Here, we list algorithms subjected to the experiments given in the next section, including both established methods and the newly proposed algorithms.  Algorithms then end with $P$ indicate that we have also developed parallel versions.

\begin{description}[style=unboxed,leftmargin=0cm]
\item[W/WP]: Wedge-checking/Parallel version
\item[WD/WDP]: Wedge-checking(direction-oriented)/Parallel version
\item[EM/EMP]: Edge Iterator with MergePath for set intersection/Parallel version
\item[EMD/EMDP]: Edge Iterator with MergePath for set intersection (direction-oriented)/Parallel version
\item[EB/EBP]: Edge Iterator with BinarySearch for set intersection/Parallel version
\item[EBD/EBDP]: Edge Iterator with BinarySearch for set intersection (direction-oriented)/Parallel version
\item[ET/ETP]: Edge Iterator with partitioning for set intersection/Parallel version
\item[ETD/ETDP]: Edge Iterator with partitioning for set intersection (direction-oriented)/Parallel version
\item[EH/EHP]: Edge Iterator with Hashing for set intersection/Parallel version
\item[EHD/EHDP]: Edge Iterator with Hashing for set intersection (direction-oriented)/Parallel version
\item[F]: Forward
\item[FH]: Forward with Hashing
\item[FHD]: Forward with Hashing and degree-ordering
\item[TS]: Tri\_simple (Davis \cite{davis2018HPEC})
\item[LA]: Linear Algebra (CMU \cite{Low2017})
\item[IR]: Treelist from Itai-Rodeh \cite{itai1978finding}
\item[CETC-Seq/CETC-SM]: Cover Edge Triangle Counting (Bader, \cite{10363465})/Parallel version on shared-memory
\item[CETC-Seq-D]: Cover Edge Triangle Counting with degree-ordering (Bader, \cite{10363465})
\item[CETC-Seq-FE]: Cover Edge Forward Exchanging Triangle Counting 
\item[CETC-Seq-S]: Cover Edge Split Triangle Counting (Bader, \cite{10363539})
\item[CETC-Seq-SD]: Cover Edge Split Triangle Counting with degree-ordering (Bader, \cite{10363539})
\item[CETC-Seq-SR]: Cover Edge-Split Recursive Triangle Counting
\end{description}

\section{Experimental Results}
\subsection{Platform Configuration}

We use the Intel Development Cloud for benchmarking our results on a GNU/Linux node. The compiler is Intel(R) oneAPI DPC++/C++ Compiler 2023.1.0 (2023.1.0.20230320) and `\texttt{-O2}` is used as a compiler optimization flag. we use a recently launched Intel Xeon processor (Sapphire Rapids launched Q1'23) with DDR5 memory for both sequential and parallel implementations. The node is a dedicated 2.00 GHz 56-core (112 thread) Intel(R) Xeon(R) Platinum 8480+ processor (formerly known as Sapphire Rapids) with 105M cache and 1024GB of DDR5 RAM. 

\subsection{Data Sets}

We employ a diverse collection of graphs. The real-world datasets are from SNAP. For the synthetic graphs, we use large Graph500 RMAT graphs \cite{chakrabarti2004r} with parameters $a = 0.57$, $b = 0.19$, $c = 0.19$, and $d = 0.05$, similar to the IARPA AGILE benchmark graphs.
An overview of all 24 graphs in our dataset is presented in Table \ref{tab:dataset}. 

The values of $c$ exhibit substantial variation across different graphs, ranging from 0.90 to 0.14. Smaller $c$ values signify a higher potential for avoiding fruitless searches, thereby enhancing the efficiency of our approach.

\begin{table}
\scriptsize
\centering
\caption{Data Sets for the Experiments}
\label{tab:dataset}
\begin{tabular}{|c|c|c|c|c|c|}
\hline
Graph   Name   & Graph ID & n       & m       & \# triangles & $c$ (\%) \\ \hline
RMAT 6         & 1        & 64      & 1024    & 9100         & 93.8     \\ \hline
RMAT 7         & 2        & 128     & 2048    & 18855        & 90.9     \\ \hline
RMAT 8         & 3        & 256     & 4096    & 39602        & 87.6     \\ \hline
RMAT 9         & 4        & 512     & 8192    & 86470        & 87.2     \\ \hline
RMAT 10        & 5        & 1024    & 16384   & 187855       & 82.8     \\ \hline
RMAT 11        & 6        & 2048    & 32768   & 408876       & 81.1     \\ \hline
RMAT 12        & 7        & 4096    & 65536   & 896224       & 77.5     \\ \hline
RMAT 13        & 8        & 8192    & 131072  & 1988410      & 74.9     \\ \hline
RMAT 14        & 9        & 16384   & 262144  & 4355418      & 70.5     \\ \hline
RMAT 15        & 10       & 32768   & 524288  & 9576800      & 68.4     \\ \hline
RMAT 16        & 11       & 65536   & 1048576 & 21133772     & 65.5     \\ \hline
RMAT 17        & 12       & 131072  & 2097152 & 46439638     & 62.8     \\ \hline
karate         & 13       & 34      & 78      & 45           & 35.9     \\ \hline
amazon0302     & 14       & 262111  & 899792  & 717719       & 44.2     \\ \hline
amazon0312     & 15       & 400727  & 2349869 & 3686467      & 52.4     \\ \hline
amazon0505     & 16       & 410236  & 2439437 & 3951063      & 52.7     \\ \hline
amazon0601     & 17       & 403394  & 2443408 & 3986507      & 52.8     \\ \hline
loc-Brightkite & 18       & 58228   & 214078  & 494728       & 43.2     \\ \hline
loc-Gowalla    & 19       & 196591  & 950327  & 2273138      & 50.8     \\ \hline
roadNet-CA     & 20       & 1971281 & 2766607 & 120676       & 14.5     \\ \hline
roadNet-PA     & 21       & 1090920 & 1541898 & 67150        & 14.6     \\ \hline
roadNet-TX     & 22       & 1393383 & 1921660 & 82869        & 14       \\ \hline
soc-Epinions1  & 23       & 75888   & 405740  & 1624481      & 53.3     \\ \hline
wiki-Vote      & 24       & 8297    & 100762  & 608389       & 54.3     \\ \hline
\end{tabular}
\end{table}

\subsection{Results and Analysis of Sequential Algorithms}
The execution times of the sequential algorithms (in seconds) are presented in Table.~\ref{t:results:sequential}.

\newsavebox{\boxSequential}
\begin{lrbox}{\boxSequential}
\begin{tabular}{lrrrrrrrrrrr}
Graph          & W         & WD        & EM       & EMD      & EB       & EBD      & ET       & ETD      & EH       & EHD      & F  \\ \hline
RMAT 6  & 0.0023	& 0.000435	& 0.000608	& 0.000301	& 0.001534	& 0.000752	& 0.001722	& 0.000896	& 0.000118	& 0.000055	& 0.000055 \\
RMAT 7  & 0.005482	& 0.001603	& 0.001973	& 0.000992	& 0.004567	& 0.002257	& 0.005145	& 0.002765	& 0.000354	& 0.000166	& 0.000219 \\
RMAT 8  & 0.016084	& 0.004622	& 0.005455	& 0.002726	& 0.012873	& 0.006365	& 0.01379	& 0.007551	& 0.000866	& 0.000429	& 0.000642 \\
RMAT 9  & 0.04884	& 0.014317	& 0.014643	& 0.007337	& 0.031095	& 0.014825	& 0.018992	& 0.010409	& 0.001124	& 0.000567	& 0.000912 \\
RMAT 10 & 0.081409	& 0.023978	& 0.020342	& 0.01015	& 0.046255	& 0.023038	& 0.049716	& 0.02734	& 0.00289	& 0.001477	& 0.002433 \\
RMAT 11 & 0.260367	& 0.077086	& 0.053735	& 0.026881	& 0.109204	& 0.054332	& 0.128502	& 0.071314	& 0.006977	& 0.00352	& 0.006196 \\
RMAT 12 & 0.896607	& 0.262051	& 0.141176	& 0.070548	& 0.293548	& 0.146626	& 0.331856	& 0.185648	& 0.017128	& 0.008613	& 0.015863 \\
RMAT 13 & 2.975701	& 0.876912	& 0.372609	& 0.186514	& 0.73989	& 0.369528	& 0.849023	& 0.476537	& 0.044211	& 0.022125	& 0.040797 \\
RMAT 14 & 10.520327	& 3.108799	& 0.987748	& 0.492118	& 1.829937	& 0.914373	& 2.192774	& 1.241524	& 0.114199	& 0.056226	& 0.104752 \\
RMAT 15 & 35.785918	& 10.461789	& 2.626837	& 1.307338	& 4.834125	& 2.397788	& 5.607495	& 3.185066	& 0.31823	& 0.152634	& 0.27252 \\
RMAt 16 & 122.100925	& 35.690483	& 6.931398	& 3.452957	& 12.020692	& 5.942763	& 14.38633	& 8.202219	& 1.072639	& 0.51392	& 0.714004 \\
RMAt 17 & 426.945522	& 124.153096	& 18.328512	& 9.153039	& 31.596123	& 15.577652	& 37.014029	& 21.238411	& 3.249189	& 1.582771	& 1.865376 \\
karate & 0.000015	& 0.000007	& 0.000012	& 0.000006	& 0.000019	& 0.000009	& 0.000033	& 0.000014	& 0.000009	& 0.000005	& 0.000004 \\
amazon0302 & 0.298977	& 0.052727	& 0.143474	& 0.066293	& 0.193383	& 0.090165	& 0.295645	& 0.152895	& 0.06663	& 0.03324	& 0.024458 \\
amazon0312 & 1.293403	& 0.410326	& 0.720808	& 0.352071	& 1.007608	& 0.474253	& 1.587602	& 0.895909	& 0.286009	& 0.135694	& 0.108818 \\
amazon0505 & 1.4014	& 0.458928	& 0.75572	& 0.370004	& 1.071849	& 0.505645	& 1.687541	& 0.950925	& 0.296272	& 0.140679	& 0.114741 \\
amazon0601 & 1.400537	& 0.458847	& 0.762124	& 0.372543	& 1.085706	& 0.511053	& 1.694514	& 0.952832	& 0.300645	& 0.142758	& 0.118401 \\
loc-Brightkit & 0.473063	& 0.158358	& 0.115803	& 0.058309	& 0.1789	& 0.089439	& 0.262685	& 0.153408	& 0.025772	& 0.013417	& 0.011734 \\
loc-Gowalla & 9.018113	& 4.425076	& 2.083049	& 1.03675	& 1.247894	& 0.610038	& 2.850531	& 2.066343	& 0.354619	& 0.173588	& 0.085867 \\
roadNet-CA & 0.089097	& 0.032726	& 0.102766	& 0.064406	& 0.106291	& 0.06895	& 0.168963	& 0.096467	& 0.073492	& 0.051154	& 0.035571 \\
roadNet-PA & 0.074498	& 0.035727	& 0.07553	& 0.036011	& 0.060924	& 0.039204	& 0.097066	& 0.05476	& 0.041419	& 0.028734	& 0.019805 \\
roadNet-TX & 0.088466	& 0.023446	& 0.070821	& 0.044442	& 0.071961	& 0.047007	& 0.117474	& 0.066475	& 0.050618	& 0.035551	& 0.024401 \\
soc-Epinions1 & 4.892297	& 1.853538	& 0.615327	& 0.306373	& 1.144934	& 0.569414	& 1.540422	& 0.876375	& 0.09952	& 0.047979	& 0.062959 \\
wiki-Vote & 0.830841	& 0.210642	& 0.12436	& 0.062224	& 0.290031	& 0.145106	& 0.355414	& 0.190386	& 0.019756	& 0.009999	& 0.01816 \\[12pt]

Graph           & FH        & FHD       & TS        & LA       & IR         & CETC-Seq        & CETC-Seq-D       & CETC-Seq-FE        & CETC-Seq-S        & CETC-Seq-SD        & CETC-Seq-SR   \\ \hline
RMAT 6 & 0.000028	& 0.00003	& 0.000045	& 0.000049	& 0.001057	& 0.000096	& 0.00009	& 0.000036	& 0.000042	& 0.000041	& 0.000034 \\
RMAT 7 & 0.000076	& 0.000079	& 0.000118	& 0.0002	& 0.00343	& 0.00045	& 0.000337	& 0.000083	& 0.000117	& 0.000105	& 0.000112 \\
RMAT 8 & 0.00021	& 0.000213	& 0.000341	& 0.000601	& 0.010451	& 0.00118	& 0.001008	& 0.000242	& 0.000316	& 0.000326	& 0.000292 \\
RMAT 9 & 0.000279	& 0.000284	& 0.000474	& 0.000844	& 0.017542	& 0.001573	& 0.001312	& 0.000309	& 0.000402	& 0.000415	& 0.000393 \\
RMAT 10 & 0.000633	& 0.000623	& 0.001238	& 0.002231	& 0.056125	& 0.003793	& 0.00329	& 0.000721	& 0.000929	& 0.000907	& 0.000905 \\
RMAT 11 & 0.001469	& 0.001386	& 0.003257	& 0.005701	& 0.175472	& 0.008988	& 0.007996	& 0.001635	& 0.001938	& 0.002022	& 0.001902 \\
RMAT 12 & 0.0034	& 0.003114	& 0.008204	& 0.014481	& 0.541587	& 0.020948	& 0.018846	& 0.003683	& 0.004291	& 0.004306	& 0.004719 \\
RMAT 13 & 0.007899	& 0.006939	& 0.021101	& 0.036664	& 1.762489	& 0.048973	& 0.043984	& 0.008508	& 0.009416	& 0.009234	& 0.010386 \\
RMAT 14 & 0.018383	& 0.015386	& 0.05573	& 0.091965	& 5.926348	& 0.112335	& 0.100385	& 0.019555	& 0.02067	& 0.019464	& 0.020786 \\
RMAT 15 & 0.045298	& 0.038136	& 0.212061	& 0.23366	& 19.623668	& 0.266065	& 0.233845	& 0.982818	& 0.047584	& 0.043019	& 0.046379 \\
RMAT 16 & 0.120033	& 0.095672	& 0.502867	& 0.595802	& 63.575078	& 0.630521	& 0.539548	& 2.500813	& 0.117718	& 0.098857	& 0.111664 \\
RMAT 17 & 0.326685	& 0.25219	& 1.509844	& 1.513653	& 209.558816	& 1.491119	& 1.245597	& 6.385706	& 0.325907	& 0.241169	& 0.284544 \\
karate & 0.000004	& 0.000008	& 0.000005	& 0.000002	& 0.000093	& 0.000006	& 0.00001	& 0.000006	& 0.000009	& 0.000011	& 0.000009 \\
amazon0302 & 0.01929	& 0.03971	& 0.038784	& 0.0228	& 0.375845	& 0.04485	& 0.066	& 0.05224	& 0.044427	& 0.061649	& 0.064679 \\
amazon0312 & 0.067849	& 0.120262	& 0.164739	& 0.100246	& 2.223462	& 0.145356	& 0.195258	& 0.216578	& 0.121874	& 0.165797	& 0.207789 \\
amazon0505 & 0.070928	& 0.125915	& 0.169891	& 0.10574	& 2.194572	& 0.151568	& 0.204355	& 0.225309	& 0.130232	& 0.174099	& 0.200114 \\
amazon0601 & 0.072972	& 0.127608	& 0.173651	& 0.107426	& 2.112184	& 0.153916	& 0.207247	& 0.229265	& 0.132083	& 0.176369	& 0.202132 \\
loc-Brightkit & 0.006487	& 0.00982	& 0.013752	& 0.011592	& 0.58484	& 0.011902	& 0.015295	& 0.028136	& 0.008623	& 0.012866	& 0.011192 \\
loc-Gowalla & 0.038467	& 0.054073	& 0.188503	& 0.079424	& 6.433136	& 0.074779	& 0.088801	& 0.276325	& 0.045462	& 0.063348	& 0.061056 \\
roadNet-CA & 0.03906	& 0.170009	& 0.05232	& 0.038795	& 0.656398	& 0.083151	& 0.209056	& 0.100744	& 0.125747	& 0.235934	& 0.141022 \\
roadNet-PA & 0.021824	& 0.083848	& 0.029642	& 0.021815	& 0.366677	& 0.044418	& 0.109172	& 0.054114	& 0.061127	& 0.123917	& 0.068803 \\
roadNet-TX & 0.027021	& 0.106699	& 0.03607	& 0.026994	& 0.506742	& 0.054871	& 0.138155	& 0.067203	& 0.075937	& 0.163585	& 0.08579 \\
soc-Epinions1 & 0.019544	& 0.022428	& 0.051883	& 0.063396	& 4.939747	& 0.043198	& 0.041572	& 0.168057	& 0.021562	& 0.023076	& 0.026793 \\
wiki-Vote & 0.004964	& 0.00498	& 0.009605	& 0.019463	& 0.545068	& 0.013	& 0.014294	& 0.034383	& 0.005118	& 0.00535	& 0.005741 
\end{tabular}
\end{lrbox}

\begin{table*}
\scriptsize
\centering
\caption{Execution time (in seconds) for sequential algorithms.}
\scalebox{0.56}{\usebox{\boxSequential}}
\label{t:results:sequential}
\end{table*}

\subsubsection{Effect of Direction-Oriented on Sequential Algorithms} 
\label{subsec:DOExp}

The \emph{DO} performance optimization is a pivotal strategy in triangle counting, designed to mitigate redundant calculations. In this section, we explore five distinct duplicate counting algorithms, each accompanied by its corresponding \emph{DO} variant. The results presented in Fig. \ref{fig:DO} vividly demonstrate the speedup achieved by the \emph{DO} counterparts compared to their duplicate counting versions.

Evidently, across all scenarios, the majority of \emph{DO} algorithms yield a speedup of at least two-fold. Particularly, the \emph{WD} algorithm stands out with a higher average speedup of 3.637, surpassing the performance gains of other algorithms. \emph{EBD} exhibits a speedup of $2.015 \times$, closely followed by \emph{EMD} at $2.005 \times$, \emph{EHD} at $1.965 \times$, and \emph{ETD} at $1.784 \times$.

\emph{DO} optimization primarily constitutes an algorithmic enhancement, resulting in a reduction in the overall number of operations. So, for any graph, it can improve the performance and our experimental results also confirm its efficiency. However, the practical performance gains can be impacted by various factors, including memory access patterns and cache utilization. Our comprehensive experiments, conducted on diverse graphs using a range of algorithms, underscore the substantial performance enhancements achievable through \emph{DO} optimization.

In summary, \emph{DO} optimization is efficient for eliminating duplicate triangle counting and significantly improving overall performance.

\begin{figure}
    \centering
    \includegraphics[width=0.895\textwidth]{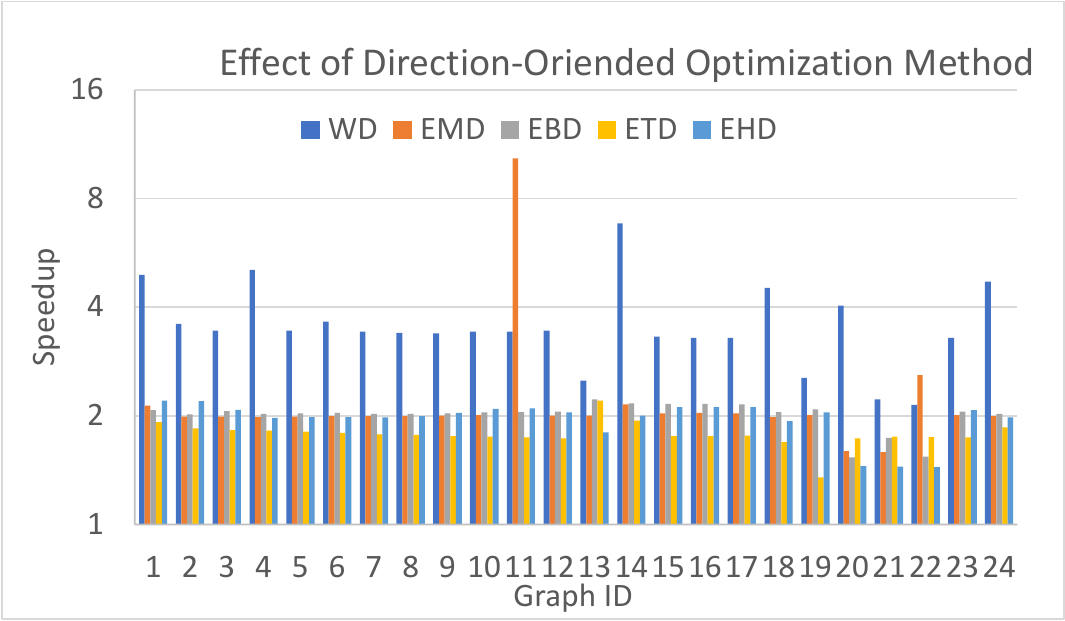}
    \caption{The speedups of direction-oriented optimization compared with the duplicate counting counterparts.}
    \label{fig:DO}
\end{figure}

\subsubsection{Effect of Hash Method on Sequential Algorithms} 
\label{subsec:hashExp}

Similar to the \emph{DO} optimization, the hash-based optimization proves highly efficient in most scenarios. In Fig. \ref{fig:Hash}, we illustrate the speedup achieved by \emph{Hash} methods compared to non-hash implementations. The first comparison showcases the speedup of the \emph{Hash} set intersection (\emph{EH}) compared with the non-hashed method (\emph{EM}), while the second presents the speedup of (\emph{FH}) compared with (\emph{F}).

The average speedup of \emph{EH} is $5.4 \times$, and for \emph{FH}, it is $3.0 \times$, underscoring the effectiveness of the hash-based optimization. Notably, the results reveal that, for more efficient algorithms, like \emph{F}, the speedup is slightly lower than that of less efficient algorithms, such as \emph{EM}.

However, we observe several exceptions. For roadNet-CA (Graph ID=20), roadNet-PA (Graph ID=21), and roadNet-TX (Graph ID=22), the \emph{Hash} algorithm \emph{FH} performs worse than the non-hashed algorithm \emph{F}. This is attributed to the unique topologies of these graphs, characterized by relatively long diameters and very few neighbors for each vertex. As the intersection sets are relatively small, the \emph{MergePath} operation on small sets proves more efficient than the \emph{Hash} method, given the relatively high hash table overhead for very small sets. Therefore,  the \emph{Hash} optimization method remains efficient but not for some special topology and diameter graphs, as the hash table overhead may not compensate for small intersection sets.

\begin{figure}
    \centering
    \includegraphics[width=0.895\textwidth]{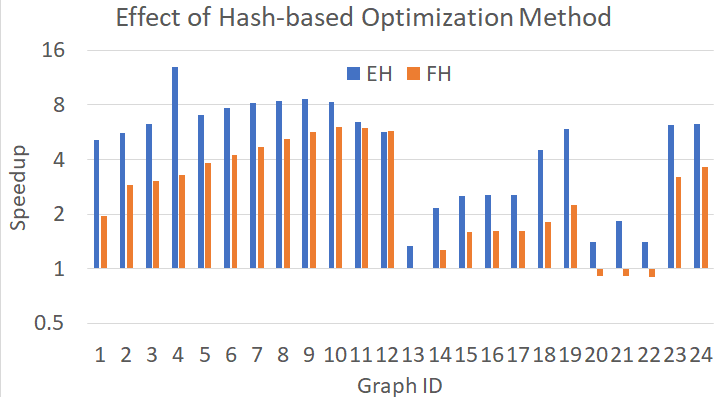}
    \caption{The speedups of hash-based optimization compared with the \emph{MergePath} method.}
    \label{fig:Hash}
\end{figure}

\subsubsection{Effect of Forward Algorithm and Its Variants} 
\label{subsec:SRExp} 

Our experimental results underscore the effectiveness of \emph{Forward} algorithm and its variants as robust algorithms for enhancing the performance of triangle counting. In Fig. \ref{fig:SIS}, we present the speedup achieved by three algorithms—namely, the forward algorithm (\emph{F}), the hashed forward algorithm (\emph{FH}), and the hashed forward algorithm with degree ordering  (\emph{FHD})—in comparison with the traditional \emph{MergePath} algorithm.

The observed performance improvement is remarkably significant. Specifically, \emph{F} achieves a $8.6 \times$ speedup, while \emph{FH} and \emph{FHD} achieve even more substantial speedups at $28.7 \times$ and $29.1 \times$, respectively. These results indicate that reducing the sizes of intersection sets and employing hash functions and degree ordering can collectively contribute to performance enhancements.

Similar to the hash method, degree ordering demonstrates substantial performance improvements across various scenarios. However, for roadNet-CA (Graph ID=20), roadNet-PA (Graph ID=21), and roadNet-TX (Graph ID=22), the hash-based algorithm \emph{FH} performs worse than the non-hashed algorithm \emph{F}, and the performance of degree ordering \emph{FD} is inferior to that of \emph{MergePath}. This arises from the fact that most vertices in these graphs possess similar and small numbers of degrees. Consequently, reordering the vertices has minimal impact on intersection performance and introduces additional overhead. Despite these exceptions, the combined approach of reducing intersection set sizes, hash functions, and degree ordering consistently enhances performance for a wide range of cases.

\begin{figure}
    \centering
    \includegraphics[width=0.895\textwidth]{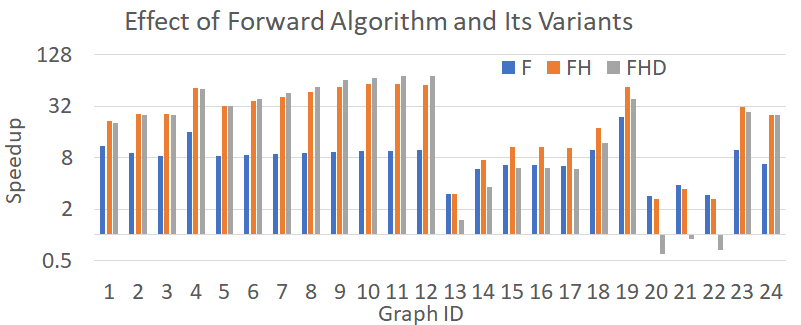}
    \caption{The speedups of Forward Algorithm and its variants compared with the \emph{MergePath} method.}
    \label{fig:SIS}
\end{figure}

\subsubsection{Effect of \emph{CETC-Seq} Algorithm and Its Variants} 
\label{subsec:NRExp}

The fundamental principle underlying the cover-edge method is minimizing unnecessary set intersection operations. In Fig. \ref{fig:CETC}, we illustrate the impact of the \emph{CETC-Seq} algorithm and its variants, namely \emph{CETC-Seq-D}, \emph{CETC-Seq-FE}, \emph{CETC-Seq-S}, \emph{CETC-Seq-SD}, \emph{CETC-Seq-SR}.

\begin{figure}
    \centering
    \includegraphics[width=0.895\textwidth]{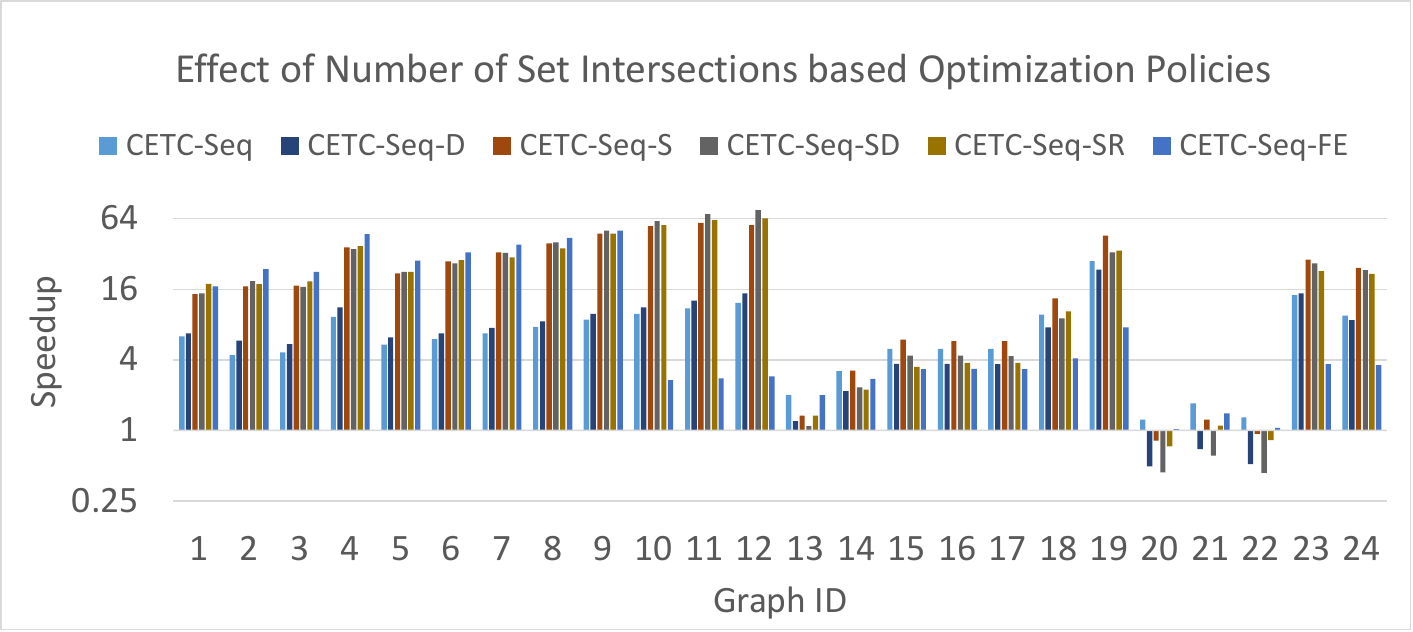}
    \caption{The speedups of \emph{CETC-Seq} and its variants compared with the \emph{MergePath} method.}
    \label{fig:CETC}
\end{figure}

Compared to the \emph{MergePath} method, \emph{CETC-Seq} demonstrates an average speedup of $7.4 \times$. \emph{CETC-Seq-D} achieves a slightly lower average speedup of $7.39 \times$, mainly due to its low performance on the road networks. 

\emph{CETC-Seq-FE} combines \emph{CETC-Seq} and \emph{F} in a unique manner. It employs \emph{CETC-Seq} for large graphs or when the $c$ value is small; otherwise, it uses \emph{F}. This switching approach yields an average speedup of $14.6 \times$. The rationale behind \emph{CETC-Seq-FE} lies in dynamically selecting the most suitable algorithm based on its compatibility with the characteristics of the graphs.

\emph{CETC-Seq-S} splits a graph into two parts based on the vertex levels marked by a \emph{BFS} pre-processing and applies \emph{CETC-Seq} and \emph{F} on each part. The performance of \emph{CETC-Seq-S} achieves a speedup of $23.4 \times$. This represents a more efficient combination method. Additionally, when we integrate degree ordering into \emph{CETC-Seq-S}, the resulting \emph{CETC-Seq-SD} algorithm performs slightly better than \emph{CETC-Seq-S}, achieving a speedup of $24.0 \times$. This result highlights that degree ordering works well with the \emph{F} algorithm. The reason is that degree ordering can further reduce the size of intersecting sets of the \emph{F} algorithm. \emph{CETC-Seq-SR} employs the recursive method to simplify the problem. For a large graph, it recursively applies \emph{CETC-Seq} to minimize set intersections, counting only triangles including non-horizontal edges, and finally applies \emph{F} to the smaller graph consisting of all horizontal edges that are known to include all the other triangles. \emph{CETC-Seq-SR} achieves an average speedup of $22.7 \times$. 

Notably, \emph{CETC-Seq} exhibits low performance on certain graphs compared to other methods. The relatively high overhead of \emph{BFS} preprocessing in \emph{CETC-Seq}, compared with set intersection, contributes to the low efficiency. A breakdown time analysis reveals that the percentages of \emph{BFS} processing time are 60\% of the total execution time. In the case of a long-diameter graph where each vertex has a small number of neighbors, the overhead of \emph{BFS} becomes large despite its time complexity of $\mathcal{O}(m)$ compared to the time complexity of total set intersections at $\mathcal{O}(m^{1.5})$. This overhead becomes particularly impactful when the neighbors of each vertex are limited, and the graph diameter is large. For road networks characterized by very small vertex degrees, where degree ordering introduces additional overhead without providing any significant benefit, \emph{CETC-Seq-D} experiences further performance degradation.

\subsubsection{Comprehensive Sequential Algorithms Comparison} \label{subsec:sequential}

In Fig. \ref{fig:sequential}, we present the relative execution time for twenty-two sequential triangle counting algorithms. If some triangle searching operations cannot find any triangle, we name them as fruitless operations here. 

While optimal in time complexity, the \emph{IR} spanning tree-based triangle counting algorithm exhibits nearly the slowest performance among all the compared algorithms. This is due to the involvement of spanning tree generation, removal of tree edges, and regeneration of a smaller graph in each iteration. Although these operations can be completed in $\mathcal{O}(m)$ time, the cost is relatively high in terms of practical performance.

The \emph{W} wedge-checking-based triangle counting algorithm often performs poorly. This is primarily because most graphs are sparse, resulting in that most wedge-checking operations are fruitless, or most wedges cannot form a triangle. For example, for the  RMAT 6 graph, the percentage of wedges/triangles is 0.53\%. For the RMAT 14 graph, the percentage reduces to 0.009\%. This makes most of the checks useless for counting triangles. \emph{W} can demonstrate better performance only when most of the graph's wedges can form triangles. This scenario is not common in most practical applications.

The algorithmic structures of \emph{EM} (Edge Merge Path), \emph{EB} (Edge Binary Search), \emph{ET} (Edge Partitioning), and \emph{EH} (Edge Hash) are very similar to each other, differing primarily in the set intersection methods they employ. Merge path requires pre-sorted adjacency lists, enabling it to compare the two adjacency lists of a given edge $\edge{u,v}$ in $d(u) + d(v)$ time. This is optimal because we have to check every neighbor. Binary search method \emph{EB} searches each vertex in a small adjacency list (e.g., $N(u)$) in a larger adjacency list (e.g., $N(v)$) in $d(u) \times \log(d(v))$ time.  \emph{ET} is a specific case of \emph{EB} and involves additional operations to find the midpoint of the two adjacency lists. Thus, from an algorithmic analysis perspective, \emph{ET}'s performance will always be worse than \emph{EB}'s. However, \emph{EB} and \emph{ET} can leverage parallelism effectively to improve performance. Our parallel results demonstrate that they may outperform \emph{EM}. \emph{EH} takes $min(d(u),d(v)) < d(u) + d(v)$ operations to find triangles, and the \emph{Hash} method doesn't require pre-sorting adjacency lists, making it better than \emph{EM} and often the best performer among the four methods.

\begin{figure*}
    \centering
    \includegraphics[width=1.0\textwidth]{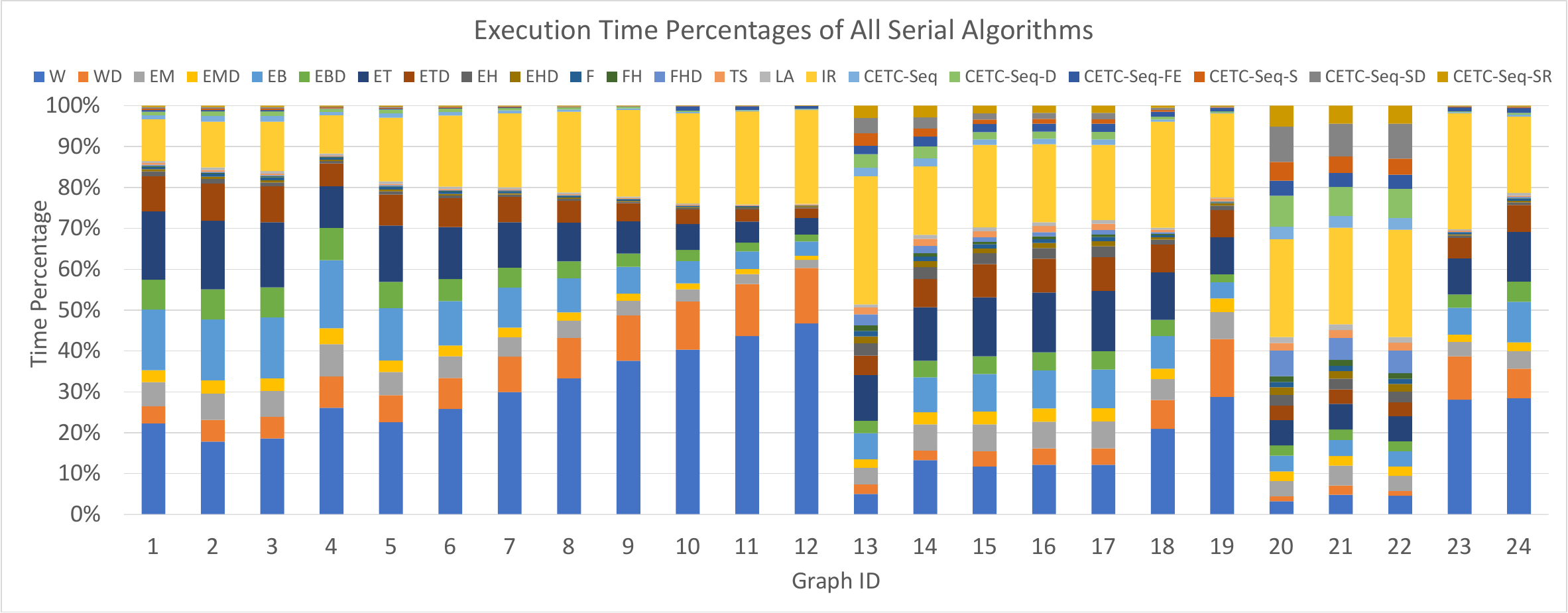}
    \caption{Percentage of execution time for different sequential triangle counting algorithms.}
    \label{fig:sequential}
\end{figure*}

\emph{TS} (Triangle Summation) and \emph{LA} (Linear Algebra) are two linear algebra-based methods. They can count the total number of triangles but cannot list all the triangles. Their performance improvements depend on optimizing formulas and architecture-related methods. The advantage of such methods lies in their ability to directly apply results from linear algebra theory and leverage highly optimized numerical techniques integrated into linear algebra libraries. Their performance is often superior to that of the \emph{EM} method.

\emph{F} (Forward) often demonstrates excellent performance in most scenarios but is inherently sequential. As we discussed earlier, \emph{F} dynamically generates two sets that are much smaller than the size of the original adjacency lists. It is based on the \emph{DO} method, which further reduces the fruitless checks in triangle counting operations. Additionally, pre-sorting vertices in non-increasing degrees enhance memory access locality and cache hit ratios. As one can observe, \emph{F} effectively reduces the operations that cannot find new triangles.  The results of \emph{FH} and \emph{FHD} show that the performance is further improved when \emph{Hash} is used.

\emph{CETC-Seq} and its variants, introduce another perspective for eliminating fruitless searches in triangle counting. First, it skips unnecessary edge searches based on a quick \emph{BFS} operation that can be completed in $\mathcal{O}(m + n)$ time. 
By leveraging the directed-oriented technique, \emph{CETC-Seq} achieves a further significant reduction in the fruitless searches during triangle counting. It is competitive with the fastest approaches and may be useful when the \emph{BFS} preprocessing overhead can be negligible. \emph{CETC-Seq-S} and its variants further optimize the performance with \emph{Hash}, degree ordering and recursive method. 

We assigned rank values to each test case and calculated the average rank value. The performance from high to low are  \emph{FH}, \emph{CETC-Seq-S}, \emph{FHD}, \emph{CETC-Seq}, \emph{LA}, \emph{F}, \emph{EHD}, \emph{TS}, \emph{CETC-Seq-SR}, \emph{CETC-Seq-SD}, \emph{CETC-Seq-FE}, \emph{EH}, \emph{CETC-Seq-D}, \emph{EMD}, \emph{EBD}, \emph{ET}, \emph{WD}, \emph{EB}, \emph{EM}, \emph{ETD}, \emph{W}, \emph{IR}.

We can say the top 4 set intersection-based triangle counting algorithms include our novel \emph{CETC-Seq-S} and \emph{CETC-Seq} algorithms.  The performance of \emph{CETC-Seq-S} with an average rank of 2.80 is slightly worse than that of \emph{FH} with an average rank of 2.0. The average rank of \emph{FHD} is 4.6 and \emph{CETC-Seq} is 6.4.

\subsubsection{Influence of the $c$ Value on the Performance of the Novel Algorithm}

Building upon the definition of our novel algorithm, its performance should be highly related to the covering ratio $c$. 

A noteworthy trend is identified when evaluating the results, particularly concerning the RMAT graphs. Our finding reveals that the forward algorithm and its variants tend to perform the fastest. As the scale of the RMAT graph increases, the parameter $c$ decreases, indicating a more substantial removal of fruitless checks after \emph{BFS}. Under these conditions, our novel method demonstrates greater efficiency compared to the \emph{F} algorithms.

These observations validate our hypothesis that the performance of our new algorithm is significantly correlated with the covering ratio $c$. As $c$ decreases, performance improves. 

Concurrently, an analysis of the performance of the road network graphs (roadNet-CA, roadNet-PA, roadNet-TX) reveals their divergence from the other graphs. Road networks, unlike social networks, often have only low-degree vertices (for instance, many degree four vertices), and large diameters. Although the covering ratio of these road networks is under 15\%, we see less benefit from the new approach due to this low value of $c$. So, a lower $c$ value does not always yield high performance.

\subsection{Results and Analysis of Parallel Algorithms on Shared-Memory} \label{subsec:POExp}
The execution times of the parallel algorithms (in seconds) are presented in Table ~\ref{t:results:parallel32} for 32 threads and in Table ~\ref{t:results:parallel224} for 224 threads.

\newsavebox{\boxParallelthirtytwo}
\begin{lrbox}{\boxParallelthirtytwo}
\begin{tabular}{lrrrrrrrrrrr}
Graph          & WP         & WDP        & EMP       & EMDP      & EBP       & EBDP      & ETP       & ETDP      & EHP       & EHDP      & CETC-SM  \\ \hline
RMAT 6 & 0.000449	& 0.000268	& 0.000226	& 0.000137	& 0.000251	& 0.000149	& 0.000236	& 0.000163	& 0.00018	& 0.000102	& 0.000143 \\
RMAT 7 & 0.001223	& 0.000446	& 0.000218	& 0.000154	& 0.000298	& 0.000159	& 0.000284	& 0.000232	& 0.000153	& 0.000088	& 0.000112 \\
RMAT 8 & 0.004417	& 0.001139	& 0.000413	& 0.000259	& 0.000607	& 0.000314	& 0.000627	& 0.000486	& 0.000368	& 0.000207	& 0.00033 \\
RMAT 9 & 0.006898	& 0.002485	& 0.000661	& 0.000604	& 0.001263	& 0.000684	& 0.001419	& 0.001355	& 0.000492	& 0.000273	& 0.00043 \\
RMAT 10 & 0.018563	& 0.006674	& 0.001527	& 0.001515	& 0.003311	& 0.001667	& 0.003505	& 0.003399	& 0.000902	& 0.000511	& 0.000935 \\
RMAT 11 & 0.043502	& 0.020521	& 0.003952	& 0.003951	& 0.007708	& 0.0041	& 0.008889	& 0.008896	& 0.002429	& 0.001303	& 0.002447 \\
RMAT 12 & 0.117014	& 0.049646	& 0.005842	& 0.005817	& 0.011434	& 0.005798	& 0.012514	& 0.011956	& 0.00309	& 0.001837	& 0.003849 \\
RMAT 13 & 0.279936	& 0.107251	& 0.014796	& 0.013801	& 0.027184	& 0.013905	& 0.03255	& 0.028344	& 0.006466	& 0.003631	& 0.00932 \\
RMAT 14 & 0.738807	& 0.294576	& 0.035303	& 0.034693	& 0.064022	& 0.031186	& 0.078634	& 0.068282	& 0.015326	& 0.008446	& 0.024843 \\
RMAT 15 & 1.945767	& 0.883749	& 0.093881	& 0.090363	& 0.159664	& 0.080606	& 0.193024	& 0.167142	& 0.034382	& 0.017455	& 0.054523 \\ 
RMAT 16 & 5.445053	& 2.616904	& 0.232325	& 0.225416	& 0.388562	& 0.190387	& 0.471949	& 0.405698	& 0.075692	& 0.038662	& 0.123288 \\
RMAT 17 & 16.105178	& 7.952496	& 0.600006	& 0.570236	& 1.017046	& 0.495515	& 1.201604	& 1.01826	& 0.173762	& 0.085809	& 0.272745 \\
karate & 0.000047	& 0.000046	& 0.000045	& 0.000052	& 0.000046	& 0.000049	& 0.000047	& 0.000047	& 0.000046	& 0.000052	& 0.00005 \\
amazon0302 & 0.008796	& 0.022824	& 0.014149	& 0.036935	& 0.020339	& 0.039651	& 0.018675	& 0.019454	& 0.011521	& 0.029648	& 0.027349 \\
amazon0312 & 0.040122	& 0.072721	& 0.039475	& 0.115382	& 0.058515	& 0.125074	& 0.069724	& 0.086947	& 0.045718	& 0.099768	& 0.095591 \\
amazon0505 & 0.047963	& 0.084478	& 0.045625	& 0.14529	& 0.0728	& 0.147519	& 0.082892	& 0.090064	& 0.046591	& 0.10615	& 0.09897 \\
amazon0601 & 0.048284	& 0.083188	& 0.05004	& 0.134631	& 0.072389	& 0.149232	& 0.082411	& 0.091988	& 0.049378	& 0.107361	& 0.101416 \\
loc-Brightkit & 0.026786	& 0.018883	& 0.01209	& 0.023231	& 0.006994	& 0.012797	& 0.008975	& 0.008192	& 0.004159	& 0.005755	& 0.005673 \\
loc-Gowalla & 1.720423	& 0.544702	& 0.509203	& 0.076514	& 0.048047	& 0.986971	& 0.991823	& 0.111531	& 0.072055	& 0.027526	& 0.027009 \\ 
roadNet-CA & 0.006707	& 0.099629	& 0.041482	& 0.090839	& 0.054792	& 0.11503	& 0.063274	& 0.095634	& 0.063806	& 0.06235	& 0.059879 \\
roadNet-PA & 0.002327	& 0.045298	& 0.04421	& 0.051884	& 0.027293	& 0.050139	& 0.031387	& 0.046944	& 0.034754	& 0.033182	& 0.02988 \\
roadNet-TX & 0.00303	& 0.061665	& 0.043471	& 0.07115	& 0.040102	& 0.080569	& 0.04338	& 0.060169	& 0.041411	& 0.042741	& 0.040724 \\
soc-Epinions1 & 0.188828	& 0.029691	& 0.024934	& 0.044974	& 0.02401	& 0.065555	& 0.052619	& 0.02325	& 0.011699	& 0.018498	& 0.019213 \\
wiki-Vote & 0.020278	& 0.00745	& 0.004677	& 0.013179	& 0.007166	& 0.018188	& 0.011463	& 0.006211	& 0.003359	& 0.007699	& 0.007689
\end{tabular}
\end{lrbox}

\begin{table*}
\scriptsize
\centering
\caption{Execution time (in seconds) for shared-memory parallel algorithms. (32 threads)}
\scalebox{0.69}{\usebox{\boxParallelthirtytwo}}
\label{t:results:parallel32}
\end{table*}

\newsavebox{\boxParalleltwotwentyfour}
\begin{lrbox}{\boxParalleltwotwentyfour}
\begin{tabular}{lrrrrrrrrrrr}
Graph          & WP         & WDP        & EMP       & EMDP      & EBP       & EBDP      & ETP       & ETDP      & EHP       & EHDP      & CETC-SM  \\ \hline
RMAT 6 & 0.00262	& 0.001665	& 0.00159	& 0.002007	& 0.001779	& 0.002791	& 0.001626	& 0.002978	& 0.002411	& 0.001657	& 0.001823 \\
RMAT 7 & 0.005946	& 0.0012	& 0.001177	& 0.000863	& 0.001092	& 0.001765	& 0.001126	& 0.000937	& 0.000799	& 0.000756	& 0.000913 \\
RMAT 8 & 0.01391	& 0.004556	& 0.002609	& 0.002877	& 0.002534	& 0.002047	& 0.003943	& 0.002736	& 0.002399	& 0.002099	& 0.002797 \\
RMAT 9 & 0.032362	& 0.008507	& 0.003102	& 0.002114	& 0.002848	& 0.004165	& 0.005741	& 0.00493	& 0.002538	& 0.002196	& 0.002278 \\
RMAT 10 & 0.096783	& 0.023608	& 0.006077	& 0.004446	& 0.006682	& 0.007183	& 0.009634	& 0.009532	& 0.002827	& 0.002962	& 0.00238 \\
RMAT 11 & 0.244285	& 0.034017	& 0.005143	& 0.005314	& 0.008019	& 0.006563	& 0.012168	& 0.011372	& 0.002077	& 0.001958	& 0.002647 \\
RMAT 12 & 0.603094	& 0.054447	& 0.007437	& 0.007343	& 0.005859	& 0.00551	& 0.019386	& 0.019021	& 0.002241	& 0.001818	& 0.003157 \\
RMAT 13 & 1.767242	& 0.191662	& 0.017554	& 0.017532	& 0.013061	& 0.012225	& 0.048984	& 0.048837	& 0.004743	& 0.003995	& 0.005822 \\
RMAT 14 & 5.805617	& 0.6103	& 0.043436	& 0.043317	& 0.032064	& 0.025181	& 0.119665	& 0.124235	& 0.008927	& 0.00792	& 0.011054 \\
RMAT 15 & 19.502675	& 1.075846	& 0.1124	& 0.112662	& 0.06781	& 0.050401	& 0.284255	& 0.274718	& 0.021382	& 0.01982	& 0.026385 \\
RMAT 16 & 4.564897	& 2.724765	& 0.27826	& 0.273954	& 0.125003	& 0.10814	& 0.566664	& 0.548104	& 0.052475	& 0.046825	& 0.046742 \\
RMAT 17 & 14.404249	& 8.292626	& 0.636928	& 0.623498	& 0.359324	& 0.209283	& 1.255369	& 1.193338	& 0.126567	& 0.11914	& 0.10871 \\
karate & 0.000246	& 0.001053	& 0.001054	& 0.00139	& 0.001052	& 0.001067	& 0.001843	& 0.002898	& 0.002982	& 0.001061	& 0.001648 \\
amazon0302 & 0.028943	& 0.007874	& 0.016644	& 0.00981	& 0.011362	& 0.006467	& 0.013704	& 0.007236	& 0.035765	& 0.024087	& 0.039814 \\
amazon0312 & 0.10007	& 0.046193	& 0.029063	& 0.020546	& 0.030613	& 0.016588	& 0.067293	& 0.053589	& 0.065335	& 0.038084	& 0.084911 \\
amazon0505 & 0.134932	& 0.043264	& 0.03243	& 0.021458	& 0.032284	& 0.017594	& 0.070966	& 0.055361	& 0.071815	& 0.039119	& 0.090892 \\
amazon0601 & 0.111001	& 0.050005	& 0.031588	& 0.02097	& 0.032657	& 0.017569	& 0.071926	& 0.053841	& 0.063443	& 0.035649	& 0.092445 \\
loc-Brightkit & 0.298194	& 0.037024	& 0.007619	& 0.005149	& 0.005186	& 0.002977	& 0.009518	& 0.009214	& 0.008017	& 0.003441	& 0.005172 \\
loc-Gowalla & 4.642555	& 1.850222	& 0.551523	& 0.597855	& 0.042251	& 0.035543	& 0.99552	& 1.153058	& 0.142559	& 0.135726	& 0.02852 \\
roadNet-CA & 0.018777	& 0.008275	& 0.034299	& 0.018575	& 0.029132	& 0.016644	& 0.029072	& 0.016026	& 0.033069	& 0.021434	& 0.094693 \\
roadNet-PA & 0.012264	& 0.005497	& 0.02696	& 0.015148	& 0.016778	& 0.013736	& 0.016576	& 0.010023	& 0.020221	& 0.012846	& 0.048229 \\
roadNet-TX & 0.013831	& 0.005999	& 0.026138	& 0.01613	& 0.024868	& 0.011356	& 0.019374	& 0.011121	& 0.024195	& 0.014831	& 0.058367 \\
soc-Epinions1 & 2.622935	& 0.322065	& 0.032298	& 0.029065	& 0.022567	& 0.01809	& 0.087574	& 0.08822	& 0.014144	& 0.00968	& 0.010896 \\
wiki-Vote & 0.527084	& 0.028837	& 0.005353	& 0.00425	& 0.007596	& 0.005707	& 0.0129	& 0.008829	& 0.002952	& 0.001597	& 0.003614 
\end{tabular}
\end{lrbox}

\begin{table*}
\scriptsize
\centering
\caption{Execution time (in seconds) for shared-memory parallel algorithms. (224 threads)}
\scalebox{0.69}{\usebox{\boxParalleltwotwentyfour}}
\label{t:results:parallel224}
\end{table*}

\subsubsection{Performance Utilizing 32 Threads}
While \emph{F} and its variants excel as sequential algorithms, they are inherently sequential and cannot be parallelized. In this section, we focus on algorithms conducive to parallelization to showcase the speedups achieved with parallel methods. Fig. \ref{fig:PP} illustrates the speedups of various parallel algorithms compared to their corresponding sequential counterparts, employing 32 threads.

The average speedups are as follows: \emph{WP} is $10.5 \times$; \emph{WDP} is $7.5 \times$; \emph{EMP} is $13.6 \times$; \emph{EMDP} is $8.3 \times$; \emph{EBP} is $23.3 \times$; \emph{EBDP} is $19.3 \times$; \emph{ETP} is $16.2 \times$; \emph{ETDP} is $10.8 \times$; \emph{EHP} is $6.6 \times$; \emph{EHDP} is $5.0 \times$; \emph{CETC-SM} is $3.9 \times$. The results affirm that parallel optimization significantly improves performance.

However, certain scenarios highlight limitations. For instance, in the case of the small-sized graph ``karate'' (Graph ID=13), all parallel algorithms fail to exhibit performance improvements. This can be attributed to the inherent overhead of the OpenMP parallel method, which outweighs the benefits for very small graphs. A similar pattern is observed for the graph RMAT 6 (Graph ID=1), where three parallel methods—\emph{EHP}, \emph{EHDP}, and \emph{CETC-SM}—show no performance improvement. As previously mentioned, the baseline algorithms \emph{EH}, \emph{EHD}, and \emph{CETC-Seq} have already demonstrated high performance, and the parallel overhead for small graphs nullifies the potential benefits of parallelization.

\begin{figure*}
    \centering
    \includegraphics[width=0.89\textwidth]{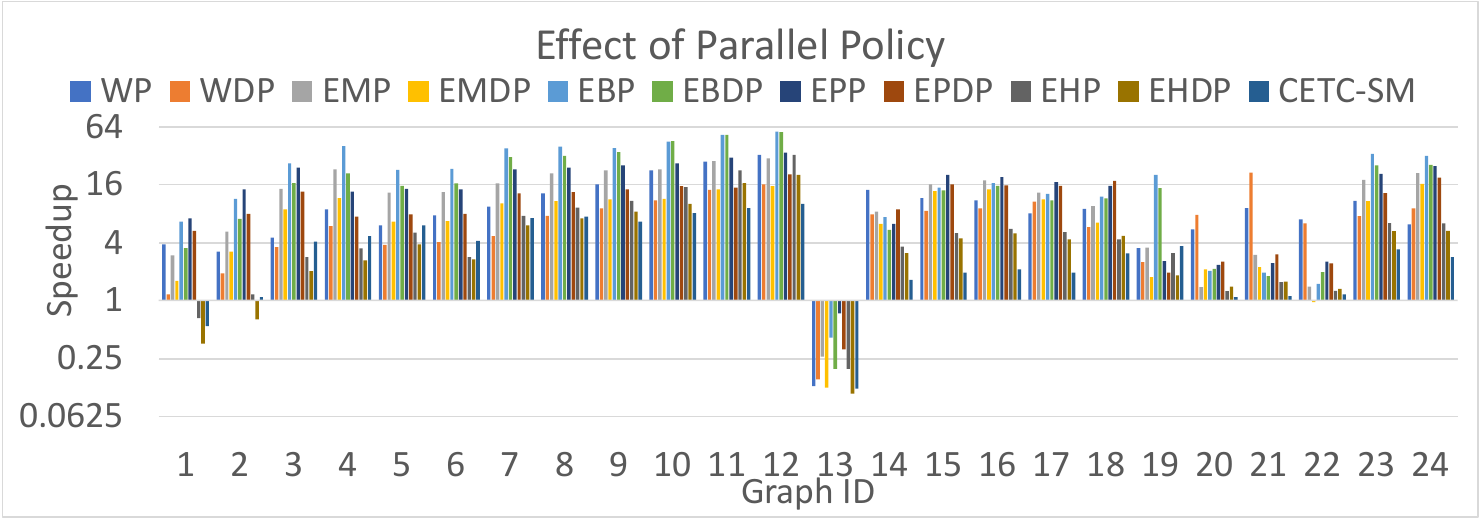}
    \caption{The speedups of parallel optimization methods compared with their sequential counterparts using 32 threads.}
    \label{fig:PP}
\end{figure*}

\subsubsection{Performance Utilizing All System Threads}
When we harness our experimental system's full parallel processing capacity, we can execute our OpenMP parallel programs with 224 threads. In Fig. \ref{fig:parallel224}, we provide the execution time percentages of various algorithms when employing all 224 system threads. We assigned rank values to each test case and calculated the average rank value. The performance from high to low are  \emph{EHDP}, \emph{EBDP}, \emph{EMDP}, \emph{EHP}, \emph{EBP}, \emph{CETC-SM}, \emph{EMP}, \emph{ETDP}, \emph{WDP}, \emph{ETP}, \emph{WP}.

The presented results highlight that not only can hash and binary search deliver commendable parallel performance by minimizing operations per parallel thread but also the application of degree ordering proves effective in improving the performance of individual threads.

\begin{figure*}
    \centering
    \includegraphics[width=0.895\textwidth]{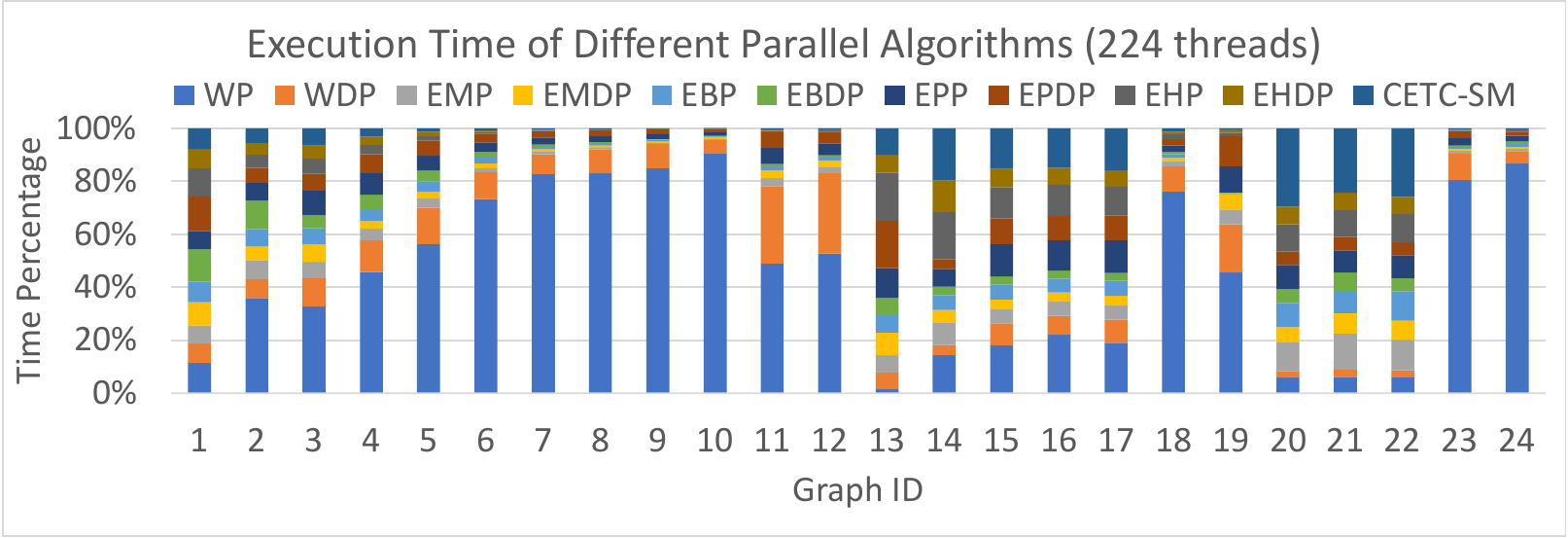}
    \caption{The execution time percentage of different parallel triangle counting algorithms with all 224 parallel threads/cores.}
    \label{fig:parallel224}
\end{figure*}

\subsubsection{Scalable Performance}
This subsection delves into the performance of these algorithms in response to varying thread counts. We use RMAT 15 as an illustrative example of a synthetic graph and Amazon0312 as a representative instance of a real graph. By progressively increasing the number of threads to 2, 4, 8, 16, 32, 64, 128, and 224, we seek to identify changes in speedup corresponding to increasing thread counts.

Fig. \ref{fig:rmat15} illustrates the change in speedup with the increasing number of threads on RMAT15. For most algorithms, a bottleneck emerges starting from 64 threads, with no discernible speedup observed with the continued increase in thread count. Notably, the \emph{WP} algorithm exhibits a degradation in performance with the incorporation of additional parallel threads. The only algorithm demonstrating notable scalability is \emph{EBP}, showcasing consistent performance improvement with the increasing number of threads. A similar observation is made for the real graph (see Fig. \ref{fig:amazon0312}), where most algorithms encounter a bottleneck at 64 threads. However, \emph{EBP} and \emph{EBDP} exhibit good scalability, indicating that binary search-based methods possess superior scalability compared to other approaches.

\begin{figure}
\begin{minipage}{.49\textwidth}
    \centering
    \includegraphics[width=.99\linewidth]{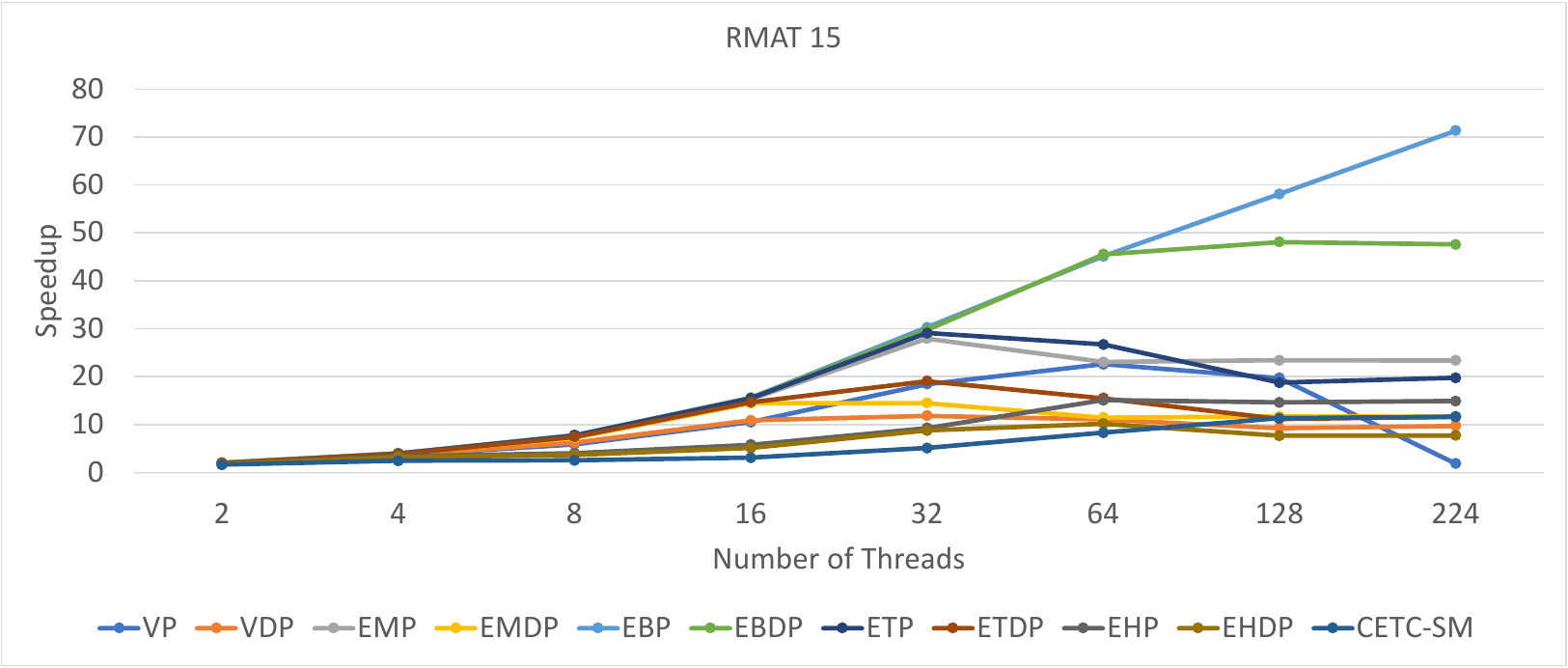}
    \caption{Speedups of various algorithms on RMAT15 compared with a single-thread setup}
    \label{fig:rmat15}
\end{minipage}
\begin{minipage}{.49\textwidth}
    \centering
    \includegraphics[width=.86\linewidth]{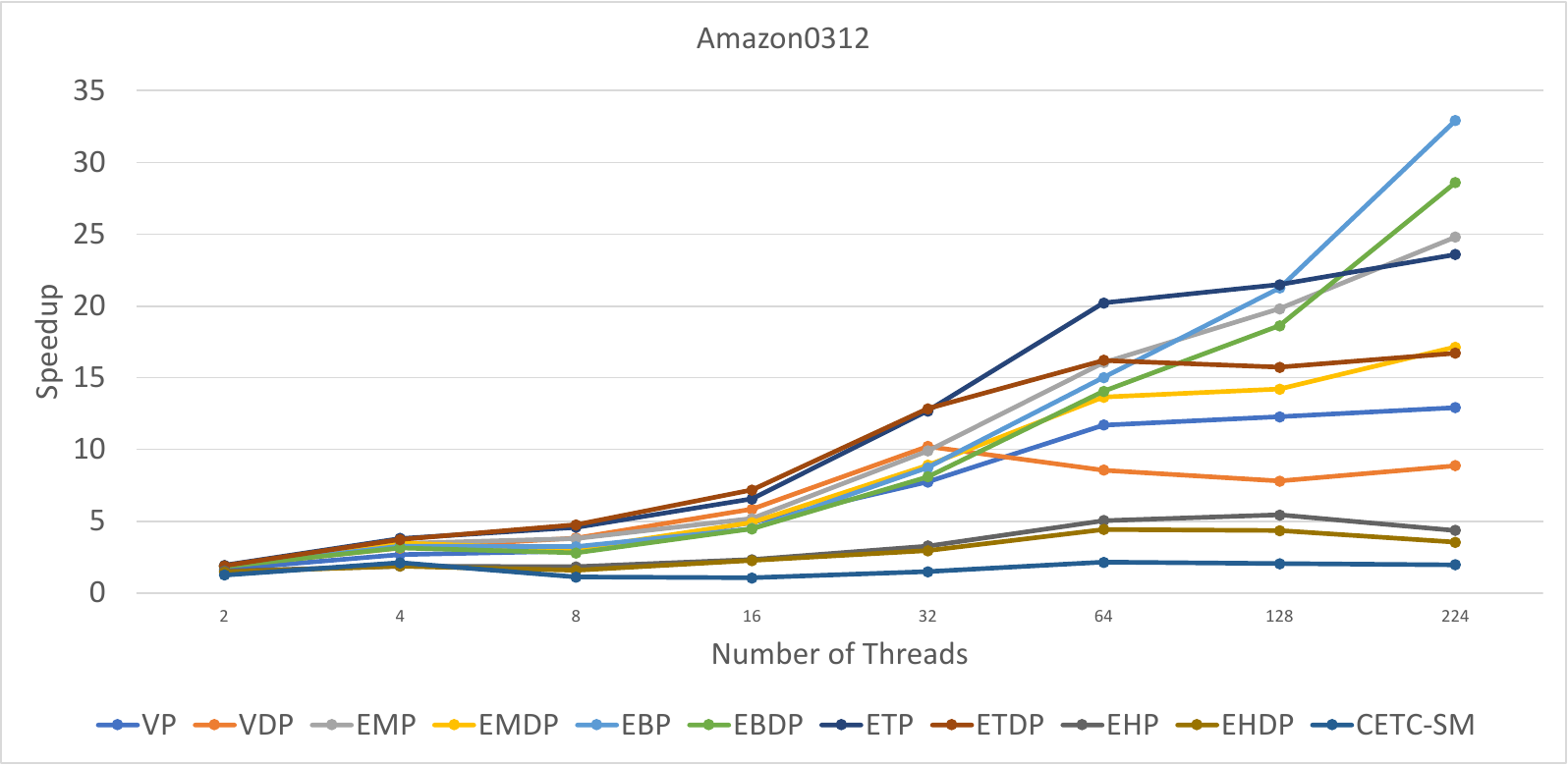}
    \caption{Speedups of various algorithms on Amazon0312 compared with a single-thread setup}
    \label{fig:amazon0312}
    \end{minipage}
\end{figure}

\subsubsection{Best Performance on Different Graphs} \label{subsec:best}
In this section, we use \emph{EM} as the performance baseline to evaluate the best speedup achieved by different algorithms. The results are summarized in Table \ref{tab:best}. The number following a specific algorithm name indicates how many parallel threads are used.

Integrated optimization methods demonstrate a substantial speedup, averaging at 75.8. Examining various algorithms on different graphs unveils insights into optimization methods.

Firstly, for small graphs like RMAT 6 (Graph ID=1), RMAT 7 (Graph ID=2), and karate (Graph ID=13), parallel optimization techniques fail to outperform the sequential \emph{FH} and linear algebra \emph{LA} methods. Practical performance considerations suggest that employing multiple parallel threads might introduce overhead for small graphs, making sequential methods more efficient.

Secondly, as graph size increases, optimal performance often requires more parallel resources. However, beyond a critical point, additional parallel resources may lead to decreased performance. For example, RMAT 8 (Graph ID=3) and roadNet-TX (Graph ID=22) achieve peak performance with 32 threads. In contrast, RMAT 9 to RMAT 10 (Graph ID 4-5), RMAT 12 (Graph ID=7), RMAT 14 to RMAT 17 (Graph ID 9-12), loc-Brightkite  (Graph ID=18), and roadNet-PA (Graph ID=21) require 64 threads. Certain graphs, such as RMAT 11 (Graph ID=6), RMAT 13 (Graph ID=8), loc-Gowalla (Graph ID=19), roadNet-CA (Graph ID=20), soc-Epinions1 (Graph ID=23), wiki-Vote (Graph ID=24), demand 128 threads. Larger graphs like amazon0302, amazon0312, amazon0505, and amazon0601 (Graph ID 14-17) leverage the full system parallel resources (224 threads). Notably, graph size alone doesn't determine parallel resource needs, as topology plays a crucial role in parallel performance. At the same time, the parallel algorithms that achieve the best performance are also different.  Among all the parallel algorithms, \emph{EHDP} has 13 times to achieve the best performance. \emph{EBDP} has four times to achieve the best performance. \emph{WDP} has three times to achieve the best performance and \emph{CETC-SM} has one time to achieve the best performance.

Thirdly, the various sequential and parallel optimizations needed for better performance can differ. For instance, \emph{WD} might not be ideal in a sequential scenario due to checking numerous wedges, many of which are not fruitful for sparse graphs. However, in a parallel scenario, \emph{WDP64} excels with 64 threads on roadNet-PA (Graph ID=21), surpassing other algorithms. The efficiency arises from the smaller number of wedges when vertex degrees are low, coupled with \emph{DO} optimization method that reduces fruitless searches. Another case is \emph{EBD}, which may not be favorable in sequential algorithms due to increased total operations compared to \emph{EMD}. However, in parallel algorithms, \emph{EBDP} could outperform \emph{MergePath} by distributing work more efficiently through parallel binary searches.

In conclusion, our results highlight that different algorithms find their optimal scenarios based on specific graph topology and hardware configurations. Graph topology and available hardware resources are pivotal factors in selecting the most efficient triangle counting algorithm.

\newsavebox{\boxBest}
\begin{lrbox}{\boxBest}
\begin{tabular}{|c|c|c|c|c|c|c|c|c|c|c|c|c|}
\hline
Graph     & 1         & 2         & 3         & 4         & 5         & 6         & 7         & 8         & 9         & 10        & 11        & 12         \\ \hline
Baseline Time & 0.0006080 & 0.0019730 & 0.0054550 & 0.0146430 & 0.0203420 & 0.0537350 & 0.1411760 & 0.3726090 & 0.9877480 & 2.6268370 & 6.9313980 & 18.3285120 \\ \hline
Best Time     & 0.0000280 & 0.0000760 & 0.0002070 & 0.0002170 & 0.0003830 & 0.0008570 & 0.0014210 & 0.0029630 & 0.0066920 & 0.0150250 & 0.030892  & 0.078430   \\ \hline
Algorithm & FH        & FH        & EHDP32    & EHDP64    & EHDP64    & EHDP128   & EHDP64    & EHDP128   & EHDP64    & EHDP64    & EHDP64    & EHDP64     \\ \hline
Speedup   & 21.7      & 26.0      & 26.4      & 67.5      & 53.1      & 62.7      & 99.3      & 125.8     & 147.6     & 174.8     & 224.4     & 233.7      \\ \hline
\hline
Graph     & 13        & 14        & 15        & 16        & 17        & 18        & 19        & 20        & 21        & 22        & 23        & 24         \\ \hline
Baseline Time & 0.0000120 & 0.1434740 & 0.7208080 & 0.7557200 & 0.7621240 & 0.1158030 & 2.0830490 & 0.1027660 & 0.0755300 & 0.0708210 & 0.6153270 & 0.1243600  \\ \hline
Best Time     & 0.0000020 & 0.0064670 & 0.0165880 & 0.0175940 & 0.0175690 & 0.0028280 & 0.0200290 & 0.002870  & 0.001671  & 0.003030  & 0.0088180 & 0.0015570  \\ \hline
Algorithm & LA        & EBDP224   & EBDP224   & EBDP224   & EBDP224   & EHDP64    & CETC-SM128    & WDP128    & WDP64     & WDP32     & EHDP128   & EHDP128    \\ \hline
Speedup   & 6.0       & 22.2      & 43.5      & 43.0      & 43.4      & 40.9      & 104.0     & 35.8      & 45.2      & 23.4      & 69.8      & 79.9       \\ \hline
\end{tabular}
\end{lrbox}

\begin{table*}
\scriptsize
\centering
\caption{Best Performance and Algorithms for Different Graphs (second).}
\scalebox{0.57}{\usebox{\boxBest}}
\label{tab:best}
\end{table*}

\subsection{Communication Analysis of CETC-DM}

\newsavebox{\boxResult}
\begin{lrbox}{\boxResult}
\begin{tabular}{|l|r|r|r|r|r|r|r|r|r|}
    \hline
        \textbf{Graph} & \textbf{n} & \textbf{m} & \textbf{\# Triangles} & \textbf{\# Wedges} & \textbf{$c$} & \textbf{$p$} & \textbf{Previous} & \textbf{This paper} & \textbf{Reduction} \\ \hline
        ca-GrQc & 5242 & 14484 & 48260 & 165798 & 0.522 & 4 & 526KB & 122KB & 4.31 \\ \hline
        ca-HepTh & 9877 & 25973 & 28339 & 277389 & 0.423 & 4 & 948KB & 218KB & 4.35 \\ \hline
        as-caida20071105 & 26475 & 53381 & 36365 & 776895 & 0.225 & 4 & 2.78MB & 401KB & 7.10 \\ \hline
        facebook\_combined & 4039 & 88234 & 1612010 & 17051688 & 0.914 & 4 & 48.8MB & 893KB & 56.0 \\ \hline
        ca-CondMat & 23133 & 93439 & 173361 & 1567373 & 0.511 & 4 & 5.61MB & 897KB & 6.40 \\ \hline
        ca-HepPh & 12008 & 118489 & 3358499 & 5081984 & 0.621 & 4 & 17.0MB & 1.13MB & 15.1 \\ \hline
        email-Enron & 36692 & 183831 & 727044 & 5933045 & 0.478 & 4 & 22.6MB & 1.79MB & 12.7 \\ \hline
        ca-AstroPh & 18772 & 198050 & 1351441 & 8451765 & 0.667 & 4 & 30.2MB & 2.08MB & 14.6 \\ \hline
        loc-brightkite\_edges & 58228 & 214078 & 494728 & 6956250 & 0.441 & 4 & 26.5MB & 2.02MB & 20.4 \\ \hline
        soc-Epinions1 & 75879 & 405740 & 1624481 & 21377935 & 0.498 & 4 & 86.7MB & 4.25MB & 10.7 \\ \hline
        amazon0601 & 403394 & 2443408 & 3986507 & 96348699 & 0.529 & 8 & 436MB & 40.9MB & 10.7 \\ \hline
        
        com-Youtube & 1134890 & 2987624 & 3056386 & 209811585 & 0.347 & 8 & 1.03GB & 44.3MB & 23.7 \\ \hline
        RMAT-36 & 68719476736 & 1099511627776 & \emph{1.2E+14} & \emph{2.73E+16} & \emph{0.311} & 128 & 218PB & \emph{192TB} & \emph{1156} \\ \hline
        RMAT-42 & 4398046511104 & 70368744177664 & \emph{1.3E+16} & \emph{5.79E+18} & \emph{0.260} & 256 & 52.8EB & \emph{22.8PB} & \emph{2368} \\ \hline
\end{tabular}
\end{lrbox}

\begin{table*}
\scriptsize
\centering
\caption{Communication costs of \emph{CETC-DM} for real and synthetic graph. The synthetic graphs are Graph500 RMAT graphs of scale 36 and 42. The column \textbf{`Previous'} represents the communication volume of the best prior parallel algorithms \cite{dolev2012tri,pearce2018k,sanders2023engineering}, that use wedge-checking based algorithms and \textbf{`This paper'} represents the communication cost of our new approach \emph{CETC-DM}. \textbf{`Reduction'} represents the communication reduction between these two, and thus, the expected speedup of the parallel algorithm. Entries in \emph{italics} are estimated values.}
\scalebox{0.68}{\usebox{\boxResult}}
\label{tab:results}
\end{table*}

In this section, we analyze the performance of the parallel triangle counting algorithm \emph{CETC-DM} (Alg.~\ref{algparallel}) on both real and synthetic graphs. We implemented our new triangle counting algorithm using Python to accurately compute the exact communication volume and determine an analytic model based on the size of the graph and number of processors, and the covering ratio ($c$) from the BFS.  The results given in Table~\ref{tab:results} are exact communication volumes from our new algorithm on all of the graphs except the two large RMAT graphs where we compute the communication volume from the validated analytic model. For the comparison with prior approaches \cite{dolev2012tri,pearce2018k,sanders2023engineering}, we estimate the communication volume from the number of wedges which is exact for all graphs other than the last two large RMAT graphs where we estimate the number of wedges using graph theory.

For the real graphs, we find the actual value of $c$, the percentage of graph edges that are cover-edges, for an arbitrary breadth-first search, and set the number $p$ of processors to a reasonable number given the size of the graph. For the synthetic graphs, we use large Graph500 RMAT graphs \cite{chakrabarti2004r} with parameters $a = 0.57$, $b = 0.19$, $c = 0.19$, and $d = 0.05$, for scale 36 and 42 with  $n=2^{\mbox{scale}}$ and $m=16n$, similar with the IARPA AGILE benchmark graphs, and set $p$ according to estimates of potential system sizes with sufficient memory to hold these large instances. 

For comparison, most prior parallel algorithms for triangle counting operate on the graph as follows.
A parallel loop over the vertices $v \in V$ produces all 2-paths (\emph{wedges}) where $\edge{v,v_1} , \edge{v,v_2} \in E$ and (w.l.o.g.) $v_1<v_2$. The processor that produces this wedge will send an open wedge query message containing the vertex IDs of $v_1$ and $v_2$ to the processor that owns vertex $v_1$. If the consumer processor that receives this query message finds an edge $\edge{v_1,v_2} \in E$, then a local triangle counter is incremented. After producers and consumers complete all work, a global reduction over the $p$ triangle counts computes the total number of triangles in $G$.

\subsubsection{Graph500 RMAT Graphs}
\if 0
Pearce \cite{pearce2017triangle} shows that for large Graph500 graphs, the total running time closely tracks the wedge checking time. Their implementation for scale 36 takes 3960s on 1.5M CPUs of IBM BG/Q to count triangles. The result 
checks $2.73 \times 10^{16}$ wedges.
Since 36 bits are needed to represent each vertex and a wedge check contains two vertices, each wedge check uses 72 bits.
Consequently, the total data volume of checks is 218PB\footnote{Throughout this paper, a petabyte (PB) is $2^{50}$ bytes and an exabyte (EB) is $2^{60}$ bytes.}.
\fi

\begin{figure}
    \centering
    \includegraphics[width=0.5\textwidth]{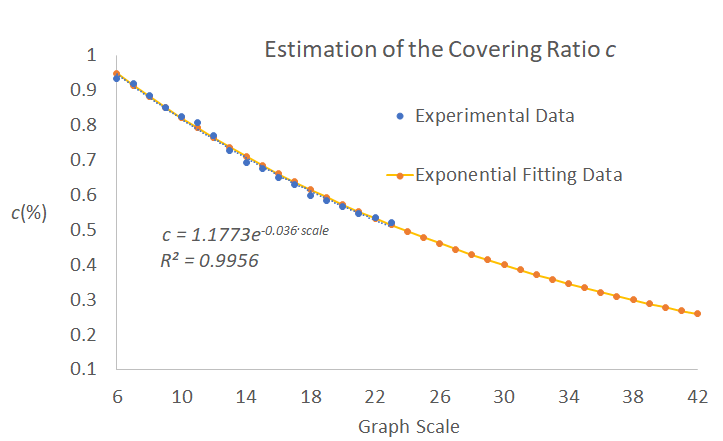}
    \caption{Estimate of $c$ using an exponential model, based on observations of $c$ for RMAT graph scale 6 to 23 graphs.}
    \label{fig:c}
\end{figure}

For the large  Graph500 RMAT graphs, the number of triangles is estimated from our model based on the number of triangles found in RMAT graphs up to scale 29 in the literature \cite{hoang2019disttc,giechaskiel2015pdtl,chakrabarti2004r,burkhardt2017graphing}. The fitting equation is $\mbox{\#Triangles} = 77.422n^{1.125}$ with $R^2 = 1.0$, where $n$ is the total number of vertices. The number of triangles estimated for scale 36 and 42 RMAT graphs are $1.20 \times 10^{14}$ and $1.30 \times 10^{16}$, respectively.

We estimate the number of wedges for the scale 36 and 42 Graph500 RMAT graphs based on the theorem given by Seshadhri \emph{et al.} in \cite{seshadhri2011hitchhiker}. According to their formula, we can estimate the expected number of vertices \(N(d)\) for a given out-degree \(d\). The number of wedges that can be formed by vertices with such a degree is calculated as \(\binom{d}{2} \times N(d)\), where \(\binom{d}{2}\) means choosing two from \(d\).

By summing all such wedges generated from the minimum ($ e\ln n $)  to the maximum degree (\(\sqrt{n}\)), which is the assumption of the formula,  we can approximate the total number of wedges in the given graph, where \(n\) is the total number of vertices. This is a conservative estimate because it only considers the out-degree instead of the sum of out and in-degrees. Employing the formula, we calculate the number of wedges to be \(2.73 \times 10^{16}\) for scale 36 and \(5.8 \times 10^{18}\) for scale 42. With $2\log n$ bits/wedge, the total volume of wedge checks is 218PB and 52.8EB for RMAT graphs of scales 36 and 42, respectively\footnote{Throughout this paper, a petabyte (PB) is $2^{50}$ bytes and an exabyte (EB) is $2^{60}$ bytes.}.

 Beamer~\emph{et~al.} \cite{beamer2011searching} find a typical BFS on a scale 27 Graph500 RMAT graph has 7 levels, so 4 bits is a reasonable estimate for $\log D$ in our analyses of scale 36 and 42 graphs. 

\looseness=-1
The methodology for estimating the value of the covering ratio $c$ for RMAT graphs is as follows. RMAT graphs from scale 6 to 23 are generated, and the exact value of $c$ is determined for each by counting the horizontal-edges after a breadth-first search. The data fit to an exponential model $c = 1.1773 e ^{-0.036 \cdot \mbox{scale}}$ with very high $R^2 = 0.9956$ (see Fig.~\ref{fig:c}). For scale 36, $c$ is estimated to be 0.311 and for scale 42, $c$ is estimated to be 0.260.

In our new distributed-memory approach \emph{CETC-DM} for scale 36, where the communication cost is 
$m \cdot (\lceil \log D \rceil + (kp + 3) \lceil\log n \rceil) + (p-1) \lceil \log n \rceil$ bits. With $\lceil \log D \rceil = 4$, and assuming $p=128$ processors, we have a total communication volume of 192TB, for a communication reduction of $1156 \times$. 
For scale 42, and assuming $p=256$ processors, we estimate the communication of our new distributed-memory triangle counting algorithm \emph{CETC-DM} as 22.8PB, for a communication reduction of $2368 \times$.

\section{Conclusion}
In this paper we design and implement a novel, fast triangle counting algorithm \emph{CETC}, that uses new techniques called cover edge set, to improve the performance. It is the first algorithm in decades to shine a new light on triangle counting and use a wholly new method of cover-edge to reduce the work of set intersections, rather than other approaches that are variants of the well-known vertex-iterator and edge-iterator methods.  We provide extensive performance results for sequential triangle counting algorithms for sparse graphs in a uniform manner. Furthermore, we employ OpenMP to parallelize most of the sequential algorithms we implemented and investigate their performance.
The results use Intel's latest processor family, the Intel Sapphire Rapids (Platinum 8480+) launched in the 1st quarter of 2023.  The new triangle counting algorithm can benefit when the results of a BFS are available, which is often the case in network science.  Additionally, this work will inspire much interest within the Graph Challenge community to implement versions of the presented algorithms for large-shared memory, distributed memory, GPU, or multi-GPU frameworks.

\section{Reproducibility}
The triangle counting source code is open source and available on GitHub at \url{https://github.com/Bader-Research/triangle-counting}.  The input graphs are from the Stanford Network Analysis Project (SNAP) available from \url{http://snap.stanford.edu/}.

\begin{acks}
This research was funded in part by NSF grant number CCF-2109988.
\end{acks}

\bibliographystyle{ACM-Reference-Format}
\bibliography{ref}


\begin{thebibliography}{43}


\ifx \showCODEN    \undefined \def \showCODEN     #1{\unskip}     \fi
\ifx \showDOI      \undefined \def \showDOI       #1{#1}\fi
\ifx \showISBNx    \undefined \def \showISBNx     #1{\unskip}     \fi
\ifx \showISBNxiii \undefined \def \showISBNxiii  #1{\unskip}     \fi
\ifx \showISSN     \undefined \def \showISSN      #1{\unskip}     \fi
\ifx \showLCCN     \undefined \def \showLCCN      #1{\unskip}     \fi
\ifx \shownote     \undefined \def \shownote      #1{#1}          \fi
\ifx \showarticletitle \undefined \def \showarticletitle #1{#1}   \fi
\ifx \showURL      \undefined \def \showURL       {\relax}        \fi
\providecommand\bibfield[2]{#2}
\providecommand\bibinfo[2]{#2}
\providecommand\natexlab[1]{#1}
\providecommand\showeprint[2][]{arXiv:#2}

\bibitem[Alman and Williams(2021)]%
        {alman2021refined}
\bibfield{author}{\bibinfo{person}{Josh Alman} {and}
  \bibinfo{person}{Virginia~Vassilevska Williams}.}
  \bibinfo{year}{2021}\natexlab{}.
\newblock \showarticletitle{A refined laser method and faster matrix
  multiplication}. In \bibinfo{booktitle}{\emph{Proceedings of the 2021
  ACM-SIAM Symposium on Discrete Algorithms (SODA)}}. SIAM,
  \bibinfo{pages}{522--539}.
\newblock


\bibitem[Alon et~al\mbox{.}(1997)]%
        {alon1997finding}
\bibfield{author}{\bibinfo{person}{Noga Alon}, \bibinfo{person}{Raphael
  Yuster}, {and} \bibinfo{person}{Uri Zwick}.} \bibinfo{year}{1997}\natexlab{}.
\newblock \showarticletitle{Finding and counting given length cycles}.
\newblock \bibinfo{journal}{\emph{Algorithmica}} \bibinfo{volume}{17},
  \bibinfo{number}{3} (\bibinfo{year}{1997}), \bibinfo{pages}{209--223}.
\newblock


\bibitem[Arifuzzaman et~al\mbox{.}(2019)]%
        {arifuzzaman2019}
\bibfield{author}{\bibinfo{person}{Shaikh Arifuzzaman}, \bibinfo{person}{Maleq
  Khan}, {and} \bibinfo{person}{Madhav Marathe}.}
  \bibinfo{year}{2019}\natexlab{}.
\newblock \showarticletitle{Fast Parallel Algorithms for Counting and Listing
  Triangles in Big Graphs}.
\newblock \bibinfo{journal}{\emph{ACM Trans. Knowl. Discov. Data}}
  \bibinfo{volume}{14}, \bibinfo{number}{1}, Article \bibinfo{articleno}{5}
  (\bibinfo{date}{Dec} \bibinfo{year}{2019}), \bibinfo{numpages}{34}~pages.
\newblock
\showISSN{1556-4681}
\urldef\tempurl%
\url{https://doi.org/10.1145/3365676}
\showDOI{\tempurl}


\bibitem[Bader(2023)]%
        {10363539}
\bibfield{author}{\bibinfo{person}{David~A. Bader}.}
  \bibinfo{year}{2023}\natexlab{}.
\newblock \showarticletitle{Fast Triangle Counting}. In
  \bibinfo{booktitle}{\emph{2023 IEEE High Performance Extreme Computing
  Conference (HPEC)}}. \bibinfo{pages}{1--6}.
\newblock
\urldef\tempurl%
\url{https://doi.org/10.1109/HPEC58863.2023.10363539}
\showDOI{\tempurl}


\bibitem[Bader et~al\mbox{.}(2023)]%
        {10363465}
\bibfield{author}{\bibinfo{person}{David~A. Bader}, \bibinfo{person}{Fuhuan
  Li}, \bibinfo{person}{Anya Ganeshan}, \bibinfo{person}{Ahmet Gundogdu},
  \bibinfo{person}{Jason Lew}, \bibinfo{person}{Oliver~Alvarado Rodriguez},
  {and} \bibinfo{person}{Zhihui Du}.} \bibinfo{year}{2023}\natexlab{}.
\newblock \showarticletitle{Triangle Counting Through Cover-Edges}. In
  \bibinfo{booktitle}{\emph{2023 IEEE High Performance Extreme Computing
  Conference (HPEC)}}. \bibinfo{pages}{1--7}.
\newblock
\urldef\tempurl%
\url{https://doi.org/10.1109/HPEC58863.2023.10363465}
\showDOI{\tempurl}


\bibitem[Beamer et~al\mbox{.}(2011)]%
        {beamer2011searching}
\bibfield{author}{\bibinfo{person}{Scott Beamer}, \bibinfo{person}{Krste
  Asanovic}, {and} \bibinfo{person}{David Patterson}.}
  \bibinfo{year}{2011}\natexlab{}.
\newblock \showarticletitle{Searching for a parent instead of fighting over
  children: A fast breadth-first search implementation for {Graph500}}.
\newblock \bibinfo{journal}{\emph{EECS Department, University of California,
  Berkeley, Tech. Rep. UCB/EECS-2011-117}} (\bibinfo{year}{2011}).
\newblock


\bibitem[Burkhardt(2017)]%
        {burkhardt2017graphing}
\bibfield{author}{\bibinfo{person}{Paul Burkhardt}.}
  \bibinfo{year}{2017}\natexlab{}.
\newblock \showarticletitle{Graphing trillions of triangles}.
\newblock \bibinfo{journal}{\emph{Information Visualization}}
  \bibinfo{volume}{16}, \bibinfo{number}{3} (\bibinfo{year}{2017}),
  \bibinfo{pages}{157--166}.
\newblock


\bibitem[Burkhardt(2021)]%
        {burkhardt2021triangle}
\bibfield{author}{\bibinfo{person}{Paul Burkhardt}.}
  \bibinfo{year}{2021}\natexlab{}.
\newblock \showarticletitle{Triangle Centrality}.
\newblock \bibinfo{journal}{\emph{CoRR}}  \bibinfo{volume}{abs/2105.00110}
  (\bibinfo{year}{2021}).
\newblock
\showeprint[arXiv]{2105.00110}
\urldef\tempurl%
\url{https://arxiv.org/abs/2105.00110}
\showURL{%
\tempurl}


\bibitem[Chakrabarti et~al\mbox{.}(2004)]%
        {chakrabarti2004r}
\bibfield{author}{\bibinfo{person}{Deepayan Chakrabarti},
  \bibinfo{person}{Yiping Zhan}, {and} \bibinfo{person}{Christos Faloutsos}.}
  \bibinfo{year}{2004}\natexlab{}.
\newblock \showarticletitle{{R-MAT}: A recursive model for graph mining}. In
  \bibinfo{booktitle}{\emph{Proceedings of the 2004 SIAM International
  Conference on Data Mining}}. SIAM, \bibinfo{pages}{442--446}.
\newblock


\bibitem[Chiba and Nishizeki(1985)]%
        {chiba1985}
\bibfield{author}{\bibinfo{person}{Norishige Chiba} {and}
  \bibinfo{person}{Takao Nishizeki}.} \bibinfo{year}{1985}\natexlab{}.
\newblock \showarticletitle{Arboricity and Subgraph Listing Algorithms}.
\newblock \bibinfo{journal}{\emph{SIAM J. Comput.}} \bibinfo{volume}{14},
  \bibinfo{number}{1} (\bibinfo{year}{1985}), \bibinfo{pages}{210--223}.
\newblock
\urldef\tempurl%
\url{https://doi.org/10.1137/0214017}
\showDOI{\tempurl}


\bibitem[Cohen(2008)]%
        {cohen2008trusses}
\bibfield{author}{\bibinfo{person}{Jonathan Cohen}.}
  \bibinfo{year}{2008}\natexlab{}.
\newblock \showarticletitle{Trusses: Cohesive subgraphs for social network
  analysis}.
\newblock \bibinfo{journal}{\emph{National Security Agency Technical Report}}
  \bibinfo{volume}{16}, \bibinfo{number}{3.1} (\bibinfo{year}{2008}).
\newblock


\bibitem[Cohen(2009)]%
        {cohen2009graph}
\bibfield{author}{\bibinfo{person}{Jonathan Cohen}.}
  \bibinfo{year}{2009}\natexlab{}.
\newblock \showarticletitle{Graph twiddling in a {MapReduce} world}.
\newblock \bibinfo{journal}{\emph{Computing in Science \& Engineering}}
  \bibinfo{volume}{11}, \bibinfo{number}{4} (\bibinfo{year}{2009}),
  \bibinfo{pages}{29--41}.
\newblock


\bibitem[Cormen et~al\mbox{.}(2022)]%
        {cormen2022introduction}
\bibfield{author}{\bibinfo{person}{Thomas~H Cormen}, \bibinfo{person}{Charles~E
  Leiserson}, \bibinfo{person}{Ronald~L Rivest}, {and}
  \bibinfo{person}{Clifford Stein}.} \bibinfo{year}{2022}\natexlab{}.
\newblock \bibinfo{booktitle}{\emph{Introduction to algorithms}}.
\newblock \bibinfo{publisher}{MIT press}.
\newblock


\bibitem[Davis(2018)]%
        {davis2018HPEC}
\bibfield{author}{\bibinfo{person}{Timothy~A. Davis}.}
  \bibinfo{year}{2018}\natexlab{}.
\newblock \showarticletitle{Graph algorithms via {SuiteSparse}: {GraphBLAS}:
  triangle counting and K-truss}. In \bibinfo{booktitle}{\emph{2018 IEEE High
  Performance Extreme Computing Conference (HPEC)}}. \bibinfo{pages}{1--6}.
\newblock
\urldef\tempurl%
\url{https://doi.org/10.1109/HPEC.2018.8547538}
\showDOI{\tempurl}


\bibitem[Dolev et~al\mbox{.}(2012)]%
        {dolev2012tri}
\bibfield{author}{\bibinfo{person}{Danny Dolev}, \bibinfo{person}{Christoph
  Lenzen}, {and} \bibinfo{person}{Shir Peled}.}
  \bibinfo{year}{2012}\natexlab{}.
\newblock \showarticletitle{'{Tri}, tri again': finding triangles and small
  subgraphs in a distributed setting}. In
  \bibinfo{booktitle}{\emph{International Symposium on Distributed Computing}}.
  Springer, \bibinfo{pages}{195--209}.
\newblock


\bibitem[Ghosh and Halappanavar(2020)]%
        {ghosh2020tric}
\bibfield{author}{\bibinfo{person}{Sayan Ghosh} {and}
  \bibinfo{person}{Mahantesh Halappanavar}.} \bibinfo{year}{2020}\natexlab{}.
\newblock \showarticletitle{{TriC}: Distributed-memory Triangle Counting by
  Exploiting the Graph Structure}. In \bibinfo{booktitle}{\emph{IEEE High
  Performance Extreme Computing Conference (HPEC)}}. \bibinfo{pages}{1--6}.
\newblock


\bibitem[Giechaskiel et~al\mbox{.}(2015)]%
        {giechaskiel2015pdtl}
\bibfield{author}{\bibinfo{person}{Ilias Giechaskiel}, \bibinfo{person}{George
  Panagopoulos}, {and} \bibinfo{person}{Eiko Yoneki}.}
  \bibinfo{year}{2015}\natexlab{}.
\newblock \showarticletitle{{PDTL}: Parallel and distributed triangle listing
  for massive graphs}. In \bibinfo{booktitle}{\emph{2015 44th International
  Conference on Parallel Processing}}. IEEE, \bibinfo{pages}{370--379}.
\newblock


\bibitem[{Graph 500 Steering Committee}(2010)]%
        {graph500}
\bibfield{author}{\bibinfo{person}{{Graph 500 Steering Committee}}.}
  \bibinfo{year}{2010}\natexlab{}.
\newblock \bibinfo{title}{The {Graph500} benchmark}.
\newblock
\newblock
\urldef\tempurl%
\url{https://www.graph500.org}
\showURL{%
\tempurl}


\bibitem[Harrod(2020)]%
        {slides_on_AGILE}
\bibfield{author}{\bibinfo{person}{William Harrod}.}
  \bibinfo{year}{2020}\natexlab{}.
\newblock \bibinfo{title}{Advanced Graphical Intelligence Logical Computing
  Environment ({AGILE})}.
\newblock
  \bibinfo{howpublished}{\url{https://www.iarpa.gov/images/PropsersDayPDFs/AGILE/AGILE_Proposers_Day_sm.pdf}}.
\newblock
\newblock
\shownote{Accessed: {2022–05-24}}.


\bibitem[Hoang et~al\mbox{.}(2019)]%
        {hoang2019disttc}
\bibfield{author}{\bibinfo{person}{Loc Hoang}, \bibinfo{person}{Vishwesh
  Jatala}, \bibinfo{person}{Xuhao Chen}, \bibinfo{person}{Udit Agarwal},
  \bibinfo{person}{Roshan Dathathri}, \bibinfo{person}{Gurbinder Gill}, {and}
  \bibinfo{person}{Keshav Pingali}.} \bibinfo{year}{2019}\natexlab{}.
\newblock \showarticletitle{{DistTC}: High performance distributed triangle
  counting}. In \bibinfo{booktitle}{\emph{IEEE High Performance Extreme
  Computing Conference (HPEC)}}. \bibinfo{pages}{1--7}.
\newblock


\bibitem[Hu et~al\mbox{.}(2021)]%
        {hu2021accelerating}
\bibfield{author}{\bibinfo{person}{Lin Hu}, \bibinfo{person}{Lei Zou}, {and}
  \bibinfo{person}{Yu Liu}.} \bibinfo{year}{2021}\natexlab{}.
\newblock \showarticletitle{Accelerating triangle counting on {GPU}}. In
  \bibinfo{booktitle}{\emph{Proceedings of the 2021 International Conference on
  Management of Data}}. \bibinfo{pages}{736--748}.
\newblock


\bibitem[Hu et~al\mbox{.}(2018)]%
        {hu2018tricore}
\bibfield{author}{\bibinfo{person}{Yang Hu}, \bibinfo{person}{Hang Liu}, {and}
  \bibinfo{person}{H~Howie Huang}.} \bibinfo{year}{2018}\natexlab{}.
\newblock \showarticletitle{{TriCore}: Parallel triangle counting on {GPUs}}.
  In \bibinfo{booktitle}{\emph{SC18: International Conference for High
  Performance Computing, Networking, Storage and Analysis}}. IEEE,
  \bibinfo{pages}{171--182}.
\newblock


\bibitem[Itai and Rodeh(1978)]%
        {itai1978finding}
\bibfield{author}{\bibinfo{person}{Alon Itai} {and} \bibinfo{person}{Michael
  Rodeh}.} \bibinfo{year}{1978}\natexlab{}.
\newblock \showarticletitle{Finding a minimum circuit in a graph}.
\newblock \bibinfo{journal}{\emph{SIAM J. Comput.}} \bibinfo{volume}{7},
  \bibinfo{number}{4} (\bibinfo{year}{1978}), \bibinfo{pages}{413--423}.
\newblock


\bibitem[Latapy(2008)]%
        {latapy2007practical}
\bibfield{author}{\bibinfo{person}{Matthieu Latapy}.}
  \bibinfo{year}{2008}\natexlab{}.
\newblock \showarticletitle{Main-memory triangle computations for very large
  (sparse (power-law)) graphs}.
\newblock \bibinfo{journal}{\emph{Theoretical Computer Science}}
  \bibinfo{volume}{407}, \bibinfo{number}{1} (\bibinfo{year}{2008}),
  \bibinfo{pages}{458--473}.
\newblock
\showISSN{0304-3975}
\urldef\tempurl%
\url{https://doi.org/10.1016/j.tcs.2008.07.017}
\showDOI{\tempurl}


\bibitem[Li and Bader(2021)]%
        {li2021graphblas}
\bibfield{author}{\bibinfo{person}{Fuhuan Li} {and} \bibinfo{person}{David~A
  Bader}.} \bibinfo{year}{2021}\natexlab{}.
\newblock \showarticletitle{A graphblas implementation of triangle centrality}.
  In \bibinfo{booktitle}{\emph{2021 IEEE High Performance Extreme Computing
  Conference (HPEC)}}. IEEE, \bibinfo{pages}{1--2}.
\newblock


\bibitem[Low et~al\mbox{.}(2017)]%
        {Low2017}
\bibfield{author}{\bibinfo{person}{Tze~Meng Low},
  \bibinfo{person}{Varun~Nagaraj Rao}, \bibinfo{person}{Matthew Lee},
  \bibinfo{person}{Doru Popovici}, \bibinfo{person}{Franz Franchetti}, {and}
  \bibinfo{person}{Scott McMillan}.} \bibinfo{year}{2017}\natexlab{}.
\newblock \showarticletitle{First look: Linear algebra-based triangle counting
  without matrix multiplication}. In \bibinfo{booktitle}{\emph{2017 IEEE High
  Performance Extreme Computing Conference (HPEC)}}. \bibinfo{pages}{1--6}.
\newblock
\urldef\tempurl%
\url{https://doi.org/10.1109/HPEC.2017.8091046}
\showDOI{\tempurl}


\bibitem[Mailthody et~al\mbox{.}(2018)]%
        {mailthody2018}
\bibfield{author}{\bibinfo{person}{Vikram~S. Mailthody}, \bibinfo{person}{Ketan
  Date}, \bibinfo{person}{Zaid Qureshi}, \bibinfo{person}{Carl Pearson},
  \bibinfo{person}{Rakesh Nagi}, \bibinfo{person}{Jinjun Xiong}, {and}
  \bibinfo{person}{Wen-mei Hwu}.} \bibinfo{year}{2018}\natexlab{}.
\newblock \showarticletitle{Collaborative ({CPU} + {GPU}) Algorithms for
  Triangle Counting and Truss Decomposition}. In \bibinfo{booktitle}{\emph{2018
  IEEE High Performance Extreme Computing Conference (HPEC)}}.
  \bibinfo{pages}{1--7}.
\newblock
\urldef\tempurl%
\url{https://doi.org/10.1109/HPEC.2018.8547517}
\showDOI{\tempurl}


\bibitem[Makkar et~al\mbox{.}(2017)]%
        {makkar2017}
\bibfield{author}{\bibinfo{person}{Devavret Makkar}, \bibinfo{person}{David~A.
  Bader}, {and} \bibinfo{person}{Oded Green}.} \bibinfo{year}{2017}\natexlab{}.
\newblock \showarticletitle{Exact and Parallel Triangle Counting in Dynamic
  Graphs}. In \bibinfo{booktitle}{\emph{24th IEEE International Conference on
  High Performance Computing, HiPC 2017, Jaipur, India, December 18-21, 2017}}.
  \bibinfo{publisher}{IEEE Computer Society}, \bibinfo{address}{Los Alamitos,
  CA}, \bibinfo{pages}{2--12}.
\newblock
\urldef\tempurl%
\url{https://doi.org/10.1109/HiPC.2017.00011}
\showDOI{\tempurl}


\bibitem[Mowlaei(2017)]%
        {Mowlaei2017}
\bibfield{author}{\bibinfo{person}{Shahir Mowlaei}.}
  \bibinfo{year}{2017}\natexlab{}.
\newblock \showarticletitle{Triangle counting via vectorized set intersection}.
  In \bibinfo{booktitle}{\emph{2017 IEEE High Performance Extreme Computing
  Conference (HPEC)}}. \bibinfo{pages}{1--5}.
\newblock
\urldef\tempurl%
\url{https://doi.org/10.1109/HPEC.2017.8091053}
\showDOI{\tempurl}


\bibitem[Ortmann and Brandes(2014)]%
        {ortmann2014}
\bibfield{author}{\bibinfo{person}{Mark Ortmann} {and} \bibinfo{person}{Ulrik
  Brandes}.} \bibinfo{year}{2014}\natexlab{}.
\newblock \showarticletitle{Triangle Listing Algorithms: Back from the
  Diversion}. In \bibinfo{booktitle}{\emph{Proceedings of the Meeting on
  Algorithm Engineering \& Expermiments}} (Portland, Oregon).
  \bibinfo{publisher}{Society for Industrial and Applied Mathematics},
  \bibinfo{address}{USA}, \bibinfo{pages}{1–8}.
\newblock


\bibitem[Parimalarangan et~al\mbox{.}(2017)]%
        {parimalarangan2017fast}
\bibfield{author}{\bibinfo{person}{Sindhuja Parimalarangan},
  \bibinfo{person}{George~M Slota}, {and} \bibinfo{person}{Kamesh Madduri}.}
  \bibinfo{year}{2017}\natexlab{}.
\newblock \showarticletitle{Fast parallel graph triad census and triangle
  counting on shared-memory platforms}. In \bibinfo{booktitle}{\emph{2017 IEEE
  International Parallel and Distributed Processing Symposium Workshops
  (IPDPSW)}}. IEEE, \bibinfo{pages}{1500--1509}.
\newblock


\bibitem[Pearce(2017)]%
        {pearce2017triangle}
\bibfield{author}{\bibinfo{person}{Roger Pearce}.}
  \bibinfo{year}{2017}\natexlab{}.
\newblock \showarticletitle{Triangle counting for scale-free graphs at scale in
  distributed memory}. In \bibinfo{booktitle}{\emph{IEEE High Performance
  Extreme Computing Conference (HPEC)}}. \bibinfo{pages}{1--4}.
\newblock


\bibitem[Pearce and Sanders(2018)]%
        {pearce2018k}
\bibfield{author}{\bibinfo{person}{Roger Pearce} {and}
  \bibinfo{person}{Geoffrey Sanders}.} \bibinfo{year}{2018}\natexlab{}.
\newblock \showarticletitle{K-truss decomposition for scale-free graphs at
  scale in distributed memory}. In \bibinfo{booktitle}{\emph{2018 IEEE high
  performance extreme computing conference (HPEC)}}. IEEE,
  \bibinfo{pages}{1--6}.
\newblock


\bibitem[Samsi et~al\mbox{.}(2018)]%
        {GraphChallenge}
\bibfield{author}{\bibinfo{person}{Siddharth Samsi}, \bibinfo{person}{Vijay
  Gadepally}, \bibinfo{person}{Michael Hurley}, \bibinfo{person}{Michael
  Jones}, \bibinfo{person}{Edward Kao}, \bibinfo{person}{Sanjeev Mohindra},
  \bibinfo{person}{Paul Monticciolo}, \bibinfo{person}{Albert Reuther},
  \bibinfo{person}{Steven Smith}, \bibinfo{person}{William Song},
  \bibinfo{person}{Diane Staheli}, {and} \bibinfo{person}{Jeremy Kepner}.}
  \bibinfo{year}{2018}\natexlab{}.
\newblock \showarticletitle{{GraphChallenge.org}: Raising the Bar on Graph
  Analytic Performance}. In \bibinfo{booktitle}{\emph{2018 IEEE High
  Performance Extreme Computing Conference (HPEC)}}. \bibinfo{pages}{1--7}.
\newblock
\urldef\tempurl%
\url{https://doi.org/10.1109/HPEC.2018.8547527}
\showDOI{\tempurl}


\bibitem[Sanders and Uhl(2023)]%
        {sanders2023engineering}
\bibfield{author}{\bibinfo{person}{Peter Sanders} {and}
  \bibinfo{person}{Tim~Niklas Uhl}.} \bibinfo{year}{2023}\natexlab{}.
\newblock \showarticletitle{Engineering a Distributed-Memory Triangle Counting
  Algorithm}.
\newblock \bibinfo{journal}{\emph{arXiv preprint arXiv:2302.11443}}
  (\bibinfo{year}{2023}).
\newblock


\bibitem[Schank(2007)]%
        {schank2007}
\bibfield{author}{\bibinfo{person}{Thomas Schank}.}
  \bibinfo{year}{2007}\natexlab{}.
\newblock \emph{\bibinfo{title}{Algorithmic Aspects of Triangle-Based Network
  Analysis}}.
\newblock \bibinfo{thesistype}{Ph.\,D. Dissertation}.
  \bibinfo{school}{Karlsruhe Institute of Technology}.
\newblock


\bibitem[Schank and Wagner(2005)]%
        {schank2005finding}
\bibfield{author}{\bibinfo{person}{Thomas Schank} {and}
  \bibinfo{person}{Dorothea Wagner}.} \bibinfo{year}{2005}\natexlab{}.
\newblock \showarticletitle{Finding, Counting and Listing All Triangles in
  Large Graphs, an Experimental Study}. In
  \bibinfo{booktitle}{\emph{Proceedings of the 4th International Conference on
  Experimental and Efficient Algorithms}} (Santorini Island, Greece)
  \emph{(\bibinfo{series}{WEA'05})}. \bibinfo{publisher}{Springer-Verlag},
  \bibinfo{address}{Berlin, Heidelberg}, \bibinfo{pages}{606–609}.
\newblock
\showISBNx{3540259201}
\urldef\tempurl%
\url{https://doi.org/10.1007/11427186_54}
\showDOI{\tempurl}


\bibitem[Seshadhri et~al\mbox{.}(2011)]%
        {seshadhri2011hitchhiker}
\bibfield{author}{\bibinfo{person}{C Seshadhri}, \bibinfo{person}{Ali Pinar},
  {and} \bibinfo{person}{Tamara~G Kolda}.} \bibinfo{year}{2011}\natexlab{}.
\newblock \showarticletitle{A hitchhiker’s guide to choosing parameters of
  stochastic {K}ronecker graphs}.
\newblock \bibinfo{journal}{\emph{CoRR, abs/1102.5046}}  \bibinfo{volume}{1}
  (\bibinfo{year}{2011}).
\newblock


\bibitem[Shun and Tangwongsan(2015)]%
        {shun2015multicore}
\bibfield{author}{\bibinfo{person}{Julian Shun} {and} \bibinfo{person}{Kanat
  Tangwongsan}.} \bibinfo{year}{2015}\natexlab{}.
\newblock \showarticletitle{Multicore triangle computations without tuning}. In
  \bibinfo{booktitle}{\emph{2015 IEEE 31st International Conference on Data
  Engineering}}. IEEE, \bibinfo{pages}{149--160}.
\newblock


\bibitem[Ullmann(1976)]%
        {ullmann1976algorithm}
\bibfield{author}{\bibinfo{person}{Julian~R Ullmann}.}
  \bibinfo{year}{1976}\natexlab{}.
\newblock \showarticletitle{An algorithm for subgraph isomorphism}.
\newblock \bibinfo{journal}{\emph{Journal of the ACM (JACM)}}
  \bibinfo{volume}{23}, \bibinfo{number}{1} (\bibinfo{year}{1976}),
  \bibinfo{pages}{31--42}.
\newblock


\bibitem[Watts and Strogatz(1998)]%
        {watts1998collective}
\bibfield{author}{\bibinfo{person}{Duncan~J Watts} {and}
  \bibinfo{person}{Steven~H Strogatz}.} \bibinfo{year}{1998}\natexlab{}.
\newblock \showarticletitle{Collective dynamics of ‘small-world’ networks}.
\newblock \bibinfo{journal}{\emph{Nature}} \bibinfo{volume}{393},
  \bibinfo{number}{6684} (\bibinfo{year}{1998}), \bibinfo{pages}{440--442}.
\newblock


\bibitem[Williams et~al\mbox{.}(2023)]%
        {williams2023new}
\bibfield{author}{\bibinfo{person}{Virginia~Vassilevska Williams},
  \bibinfo{person}{Yinzhan Xu}, \bibinfo{person}{Zixuan Xu}, {and}
  \bibinfo{person}{Renfei Zhou}.} \bibinfo{year}{2023}\natexlab{}.
\newblock \showarticletitle{New bounds for matrix multiplication: from alpha to
  omega}.
\newblock \bibinfo{journal}{\emph{arXiv preprint arXiv:2307.07970}}
  (\bibinfo{year}{2023}).
\newblock


\bibitem[Zeng et~al\mbox{.}(2022)]%
        {zeng2022htc}
\bibfield{author}{\bibinfo{person}{Li Zeng}, \bibinfo{person}{Kang Yang},
  \bibinfo{person}{Haoran Cai}, \bibinfo{person}{Jinhua Zhou},
  \bibinfo{person}{Rongqing Zhao}, {and} \bibinfo{person}{Xin Chen}.}
  \bibinfo{year}{2022}\natexlab{}.
\newblock \showarticletitle{{HTC}: Hybrid vertex-parallel and edge-parallel
  Triangle Counting}. In \bibinfo{booktitle}{\emph{IEEE High Performance
  Extreme Computing Conference (HPEC)}}. \bibinfo{pages}{1--7}.
\newblock


\end{thebibliography}


\end{document}